\definecolor{medgreen}{rgb}{0.0, 0.75, 0.0}
\theoremstyle{definition}
\newtheorem{theorem}{Theorem}
\newtheorem{proposition}[theorem]{Proposition}
\newtheorem{lemma}[theorem]{Lemma}
\newtheorem{fact}[theorem]{Fact}
\newtheorem{example}[theorem]{Example}
\newtheorem{conjecture}{Conjecture}
\newtheorem{corollary}[theorem]{Corollary}
\newtheorem{definition}[theorem]{Definition}
\title{Stable Voting and the Splitting of Cycles}
\author{
    Wesley H. Holliday\textsuperscript{\rm 1},
    Milan Mossé\textsuperscript{\rm 1},
    Chase Norman\textsuperscript{\rm 2},
    Eric Pacuit\textsuperscript{\rm 3},
    Cynthia Wang\textsuperscript{\rm 2}
}
\begin{document}

\maketitle

\begin{abstract}
Algorithms for resolving majority cycles in preference aggregation have been studied extensively in computational social choice. Several sophisticated cycle-resolving methods, including Tideman's Ranked Pairs, Schulze's Beat Path, and Heitzig's River, are refinements of the Split Cycle (SC) method that resolves majority cycles by discarding the weakest majority victories in each cycle. Recently, Holliday and Pacuit proposed a new refinement of Split Cycle, dubbed Stable Voting, and a simplification thereof, called Simple Stable Voting (SSV). They conjectured that SSV is a refinement of SC whenever no two majority victories are of  the same size. In this paper, we prove the conjecture up to 6 alternatives and refute it for more than 6 alternatives. While our proof of the conjecture for up to 5 alternatives uses traditional mathematical reasoning, our 6-alternative proof and 7-alternative counterexample were obtained with the use of SAT solving. The SAT encoding underlying this proof and counterexample is applicable far beyond SC and SSV: it can be used to test properties of any voting method whose choice of winners depends only on the ordering of margins of victory by size.
\end{abstract}


\section{Introduction}\label{sec:intro}

Aggregating the preferences of multiple agents is a fundamental problem in group decision making, studied extensively in computational social choice \cite{Brandt2016Handbook}. One reason the problem is difficult is that when there are more than two alternatives, majority aggregation of ordinal preferences can lead to a \textit{majority cycle}, as occurs when, e.g., a majority of agents prefer $a$ to $b$, a majority prefer $b$ to $c$, and a majority prefer $c$ to $a$. The question of how to resolve such majority cycles has occupied social choice theorists since Condorcet~\shortcite{Condorcet1785}. One natural approach, already anticipated by Condorcet (see \citealt{Young1988}), is to resolve majority cycles based on the \textit{sizes} of majority victories within each majority cycle. If more agents rank alternative $a$ above $b$ than vice versa, then $a$ beats $b$ head-to-head---but by how much? How many \textit{more} agents rank $a$ above $b$ than rank $b$ above $a$? This is $a$'s \textit{margin of victory} over $b$, and these margins can help to resolve majority cycles. 

For example, the Split Cycle rule \cite{HP2023b} breaks each cycle at its weakest link. E.g., if $a$ beats $b$ head-to-head in a cycle (e.g., $a$ beats $b$, who beats $c$, who beats $d$, who beats $a$), but $a$'s margin of victory over $b$ is the smallest margin of victory in that cycle, then $a$ does not count as \textit{defeating} $b$, according to Split Cycle. After splitting all cycles at their weakest links in this way, there are always some undefeated alternatives, so this is one way of resolving majority cycles. See Figure \ref{SCexample} for an~example.

\begin{figure}[h!]

\centering

\begin{minipage}{1.5in}
    \begin{tikzpicture}
      \node[circle,draw,minimum width=0.25in] at (0,0) (a) {$a$}; 
      \node[circle,draw,minimum width=0.25in] at (3,0) (c) {$c$}; 
      \node[circle,draw,minimum width=0.25in] at (1.5,1.5) (b) {$b$}; 
      \node[circle,draw,minimum width=0.25in] at (1.5,-1.5) (d) {$d$}; 

      \path[->,draw,very thick,blue] (a) to node[fill=white,near end] {$10$} (c);
      \path[->,draw,thick] (d) to node[fill=white, near end] {$6$} (b);
      \path[->,draw,very thick,blue] (b) to node[fill=white] {$12$} (a);
      \path[->,draw,very thick,blue, dashed] (c) to node[fill=white] {$8$} (b);
      \path[->,draw,thick] (c) to node[fill=white] {$4$} (d);
      \path[->,draw,thick] (a) to node[fill=white] {$2$} (d);
    \end{tikzpicture}\vspace{.1in}
  \end{minipage}\hspace{.1in}\begin{minipage}{1.5in}\begin{tikzpicture}
      \node[circle,draw, minimum width=0.25in] at (0,0) (a) {$a$}; 
      \node[circle,draw,minimum width=0.25in] at (3,0) (c) {$c$}; 
      \node[circle,draw,minimum width=0.25in] at (1.5,1.5) (b) {$b$}; 
      \node[circle,draw,minimum width=0.25in] at (1.5,-1.5) (d) {$d$}; 

      \path[->,draw,thick] (a) to node[fill=white,near end] {$10$} (c);
      \path[->,draw,very thick,medgreen] (d) to node[fill=white,near end] {$6$} (b);
      \path[->,draw,very thick,medgreen] (b) to node[fill=white] {$12$} (a);
      \path[->,draw,thick] (c) to node[fill=white] {$8$} (b);
      \path[->,draw,thick] (c) to node[fill=white] {$4$} (d);
      \path[->,draw,very thick,medgreen, dashed] (a) to node[fill=white] {$2$} (d);

    \end{tikzpicture}\vspace{.1in}\end{minipage}
  
  \begin{minipage}{1.5in}
    \begin{tikzpicture}
      \node[circle,draw, minimum width=0.25in] at (0,0) (a) {$a$}; 
      \node[circle,draw,minimum width=0.25in] at (3,0) (c) {$c$}; 
      \node[circle,draw,minimum width=0.25in] at (1.5,1.5) (b) {$b$}; 
      \node[circle,draw,minimum width=0.25in] at (1.5,-1.5) (d) {$d$}; 

      \path[->,draw,very thick,red] (a) to node[fill=white,near end] {$10$} (c);
      \path[->,draw,very thick,red] (d) to node[fill=white,near end] {$6$} (b);
      \path[->,draw,very thick,red] (b) to node[fill=white] {$12$} (a);
      \path[->,draw,thick] (c) to node[fill=white] {$8$} (b);
      \path[->,draw,very thick,red,dashed] (c) to node[fill=white] {$4$} (d);
      \path[->,draw,thick] (a) to node[fill=white] {$2$} (d);
    \end{tikzpicture} 
  \end{minipage}\hspace{.1in}\begin{minipage}{1.5in}
    \begin{tikzpicture}[baseline=(current bounding box.center)]
    \node[circle,draw, minimum width=0.25in] at (0,0) (a) {$a$}; 
    \node[circle,draw,minimum width=0.25in] at (3,0) (c) {$c$}; 
    \node[circle,draw,minimum width=0.25in] at (1.5,1.5) (b) {$b$}; 
    \node[circle,draw,minimum width=0.25in, fill=blue!25] at (1.5,-1.5) (d) {$d$}; 

    \path[->,draw,thick] (a) to (c);
    \path[->,draw,thick] (d) to (b);
    \path[->,draw,thick] (b) to node  {} (a);
    \node  at (1.5,.5)  {}; 
    \node  at (2,0)  {}; 
  \end{tikzpicture}
  \end{minipage}

  \caption{Upper row and bottom left: three majority cycles, with their weakest links shown as dashed arrows. Bottom right: the head-to-head victories that count as \textit{defeats} according to Split Cycle. The only undefeated candidate is $d$.}\label{SCexample}
  \end{figure}
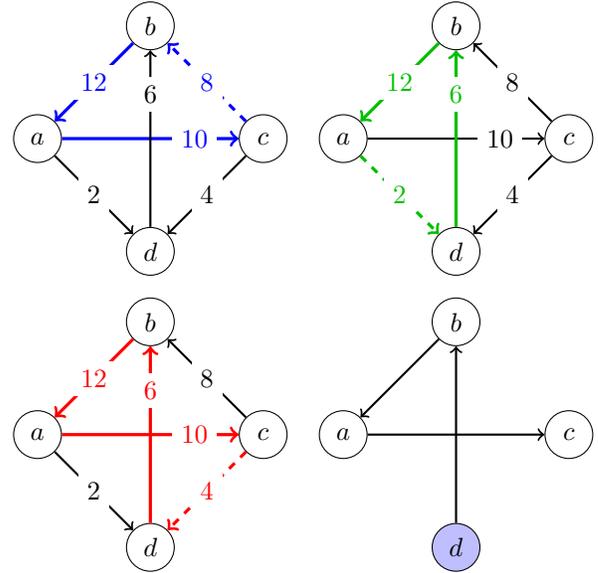

Split Cycle enjoys strong axiomatic motivation, e.g., by mitigating spoiler effects and avoiding the Strong No-Show Paradox \citep{ding2022axiomatic}. But this cannot be the end of the story, because there may be \textit{multiple} undefeated alternatives after splitting all cycles at their weakest links, as in  Figure \ref{TwoWinners}. How then do we choose from among the undefeated alternatives? Several sophisticated Condorcet voting methods,\footnote{A \textit{Condorcet method} is a voting method such that whenever there is a \textit{Condorcet winner}, an alternative that beats every other alternative head-to-head, the voting method elects that alternative.} including Tideman's \shortcite{Tideman1987} Ranked Pairs, Schulze's \shortcite{Schulze2011} Beat Path, and Heitzig's River (\citealt{Doring2025}), always choose  from among the undefeated alternatives according to Split Cycle; these voting methods \textit{refine} Split Cycle. 

\begin{figure}
\centering
\begin{minipage}{1.5in}\begin{tikzpicture}
\node[circle,draw,minimum width=0.25in] at (0,0)      (a) {$a$}; 
\node[circle,draw,minimum width=0.25in] at (3,0)      (b) {$c$}; 
\node[circle,draw,minimum width=0.25in] at (1.5,1.5)  (c) {$b$}; 
\node[circle,draw,minimum width=0.25in] at (1.5,-1.5) (d) {$d$};
\path[->,draw,thick] (a) to[pos=.7] node[fill=white] {$10$} (b);
\path[->,draw,thick] (c) to node[fill=white] {$2$} (a);
\path[->,draw,thick] (d) to node[fill=white] {$12$} (a);
\path[->,draw,thick] (c) to node[fill=white] {$8$} (b);
\path[->,draw,thick] (b) to node[fill=white] {$6$} (d);
\path[->,draw,thick] (d) to[pos=.7]  node[fill=white] {$4$} (c);
\end{tikzpicture}
\end{minipage}\qquad\begin{minipage}{1.5in}\begin{tikzpicture}
\node[circle,draw,minimum width=0.25in] at (0,0)      (a) {$a$}; 
\node[circle,draw,minimum width=0.25in] at (3,0)      (b) {$c$}; 
\node[circle,draw,minimum width=0.25in, fill=blue!25] at (1.5,1.5)  (c) {$b$}; 
\node[circle,draw,minimum width=0.25in, fill=blue!25] at (1.5,-1.5) (d) {$d$};
\path[->,draw,thick] (a) to  node  { } (b);

\path[->,draw,thick] (d) to node { } (a);
\path[->,draw,thick] (c) to node  { } (b);

\end{tikzpicture}
\end{minipage}
\caption{Left: margins between alternatives. Right: the defeats according to Split Cycle, yielding multiple undefeated candidates.}\label{TwoWinners}

\end{figure}
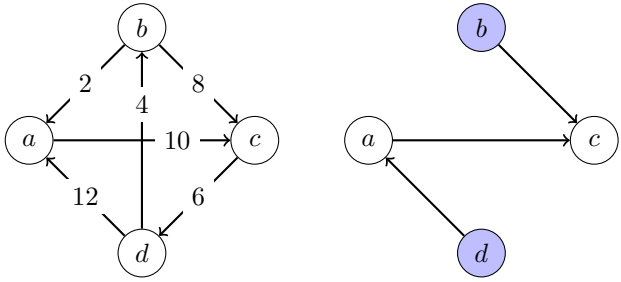

Recently, Holliday and Pacuit \shortcite{HP2023b} proposed a new refinement of Split Cycle, dubbed Stable Voting, and a simplification thereof, called Simple Stable Voting. The winning alternatives according to Simple Stable Voting (SSV) are defined \textit{recursively} as follows, for any profile $\mathbf{P}$ of agent preferences over a set $X$ of alternatives:
\begin{itemize}
\item if there is only one alternative in $X$, that alternative is the SSV winner in $\mathbf{P}$;
\item otherwise, list all head-to-head matches $a$ vs.~$b$ by decreasing margin of $a$ vs.~$b$,\footnote{This margin may be negative if $a$ loses head-to-head to $b$.} and find the first such match for which $a$ is the SSV winner in the profile $\mathbf{P}_{-b}$ with $b$ deleted; then $a$ is the SSV winner in $\mathbf{P}$.
\end{itemize}
For example, consider the head-to-head margins between alternatives shown in Figure \ref{TwoWinners}. If there are only three alternatives and no ties between them, then the SSV winner is the alternative with the smallest loss---or the alternative with no loss, if there is one \cite[Prop.~1]{HP2023b}. Using this Lemma, we can determine the SSV winner by going down the list of matches in Figure~\ref{TwoWinners} as follows:
\begin{itemize}
\item $d$ vs. $a$: margin of 12. But by the Lemma, $d$ \textit{loses} in the profile after deleting $a$ (see Figure~\ref{SSVexample}).
\item $a$ vs. $c$: margin of 10. But by the Lemma, $a$ \textit{loses} in the profile after deleting $c$ (see Figure~\ref{SSVexample}).
\item $b$ vs. $c$: margin of 8. But by the Lemma, $b$ \textit{loses} in the profile after deleting $c$ (see Figure~\ref{SSVexample}).
\item $c$ vs. $d$: margin of 6. But by the Lemma, $c$ \textit{loses} in the profile after deleting $d$ (see Figure~\ref{SSVexample}).
\item $d$ vs. $b$: margin of 4. And by the Lemma, $d$ \textit{wins} in the profile after deleting $b$ (see Figure~\ref{SSVexample}).
\end{itemize}
Thus, $d$ is the SSV winner, and SSV chooses from among the undefeated alternatives according to Split Cycle.

\begin{figure}[ht]
\centering
\begin{tikzpicture}[baseline=(current bounding box.center)]

    \node[circle,draw,minimum width=0.25in,fill=red!50] at (3,0) (c) {$c$}; 
    \node[circle,draw,minimum width=0.25in,fill=medgreen!50] at (1.5,1.5) (b) {$b$}; 
    \node[circle,draw,minimum width=0.25in, fill=red!50] at (1.5,-1.5) (d) {$d$};

    \path[->,draw,thick] (d) to (b);

    \path[<-,draw,thick] (c) to node[fill=white] {$8$} (b);
    \path[->,draw,thick] (c) to node[fill=white] {$6$} (d);

    \node[fill=white] at (1.5,.5)  {{$4$}}; 
    \node at (1.5,-2.5) {after deleting $a$};
  \end{tikzpicture}\vspace{.15in}\qquad\begin{tikzpicture}[baseline=(current bounding box.center)]
    \node[circle,draw, minimum width=0.25in, fill=red!50] at (0,0) (a) {$a$}; 
    \node[circle,draw,minimum width=0.25in, fill=red!50] at (3,0) (c) {$c$}; 
    \node[circle,draw,minimum width=0.25in, fill=medgreen!50] at (1.5,-1.5) (d) {$d$}; 

    \path[->,draw,thick] (a) to (c);

    \path[->,draw,thick] (c) to node[fill=white] {$6$} (d);
    \path[<-,draw,thick] (a) to node[fill=white] {$12$} (d);

    \node[fill=white] at (2,0)  {$10$};
      \node at (1.5,-2.5) {after deleting $b$};
  \end{tikzpicture}\qquad \begin{tikzpicture}[baseline=(current bounding box.center)]
    \node[circle,draw, minimum width=0.25in, fill=red!50] at (0,0) (a) {$a$}; 

    \node[circle,draw,minimum width=0.25in, fill=red!50] at (1.5,1.5) (b) {$b$}; 
    \node[circle,draw,minimum width=0.25in,fill=medgreen!50] at (1.5,-1.5) (d) {$d$};

    \path[->,draw,thick] (d) to (b);
    \path[->,draw,thick] (b) to node[fill=white] {$2$} (a);

    \path[<-,draw,thick] (a) to node[fill=white] {$12$} (d);

    \node[fill=white] at (1.5,.5)  {{$4$}}; 

      \node at (1.5,-2.5) {after deleting $c$};
  \end{tikzpicture}\qquad \begin{tikzpicture}[baseline=(current bounding box.center)]
    \node[circle,draw, minimum width=0.25in, fill=red!50] at (0,0) (a) {$a$}; 
    \node[circle,draw,minimum width=0.25in, fill=red!50] at (3,0) (c) {$c$}; 
    \node[circle,draw,minimum width=0.25in, fill=medgreen!50] at (1.5,1.5) (b) {$b$}; 

    \path[->,draw,thick] (a) to (c);

    \path[->,draw,thick] (b) to node[fill=white] {$2$} (a);
    \path[<-,draw,thick] (c) to node[fill=white] {$8$} (b);

    \node[fill=white] at (2,0)  {$10$}; 
      \node at (1.5,-1) {after deleting $d$};
  \end{tikzpicture}

\caption{Three-alternative elections considered in the computation of the SSV winners for the four-alternative election in Figure~\ref{TwoWinners}. SSV winners are shown in green.}\label{SSVexample}
\end{figure}
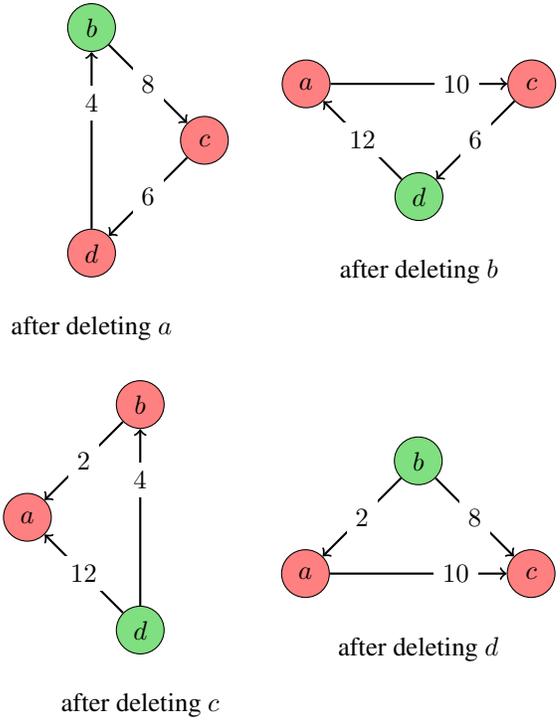

 The recursive definition of SSV seems to have little to do with Split Cycle:\footnote{However, the recursive definition of SSV may remind one of the more familiar Instant Runoff Voting (IRV), which can be defined recursively in a similar fashion: if there is only one alternative, they win; otherwise $a$ is an IRV winner if there is some alternative $b$ with the fewest first-place rankings from voters such that $a$ is the IRV winner \textit{after $b$ is deleted from all ballots}.} there is no mention of cycles, and SSV removes \textit{alternatives} from the election, rather than removing edges from the margin graph of the election. Yet it is remarkably difficult to find elections in which SSV does not choose from among the undefeated alternatives according to Split Cycle; extensive Monte Carlo searches and pen-and-paper work by multiple research groups yielded no counterexamples. Thus, the following conjecture remained unsolved.

 \begin{conjecture}[\citealt{HP2023}]\label{ConjInformal} SSV chooses from among the undefeated alternatives according to Split Cycle in every profile in which there are no tied margins.
 \end{conjecture}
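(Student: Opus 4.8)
The plan is two-pronged, because the conjecture as literally stated turns out to be \emph{false}: I expect to \emph{prove} it for profiles over at most six alternatives and to \emph{refute} it once there are seven or more. The linchpin for both directions is that the winner sets of SC and of SSV depend only on \emph{ordinal} data, namely the majority tournament on $X$ (the orientation of each pair, which is well-defined since there are no tied margins) together with the linear order on the $\binom{|X|}{2}$ margin magnitudes. For SC this is immediate from the characterization that $a$ defeats $b$ iff $m(a,b)>0$ and there is no path from $b$ back to $a$ all of whose margins exceed $m(a,b)$. For SSV it follows by induction on $|X|$, using that deleting an alternative from every ballot leaves every remaining pairwise margin unchanged (so $\mathbf{P}_{-b}$ carries exactly the induced ordinal structure) and that the ``first match $a$ vs.\ $b$'' clause refers only to the order of margins and to recursively computed winner facts. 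Conversely, by a McGarvey-type realizability argument (Debord's theorem), every pair consisting of a tournament on $X$ and a linear order on its edges is the ordinal structure of some profile with no tied margins: assign the margins the values $2,4,6,\dots$ respecting the order and realize the resulting weighted tournament. Hence, for each fixed $n$, the conjecture restricted to $n$ alternatives is a \emph{finite} combinatorial statement about (tournament, edge-order) pairs.

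For $n\le 3$ the statement is immediate from Proposition~1 of \citet{HP2023b}: on three alternatives with no ties the SSV winner is the Condorcet winner when one exists and otherwise the alternative with the smallest loss, and a short check shows this is exactly the unique undefeated alternative under SC (when there is a $3$-cycle, the head of its weakest edge). For $n=4$ and $n=5$ I would argue by cases on the majority tournament, repeatedly unwinding the SSV recursion down to the three-alternative base case and applying the path characterization of SC defeat at each level. The bookkeeping is the real work here: one must track, for each deleted alternative, how the remaining margins are re-ordered and which alternative the recursion returns. But there are only $4$ tournament types on four vertices and $12$ on five, and within each type only the relative order of a handful of key margins matters, so the analysis is finite and, I expect, manageable; organizing this case split so that no sub-case is missed is the main obstacle on the elementary side.

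For $n=6$ the hand analysis becomes infeasible, so I would reduce the finite statement to propositional satisfiability. Introduce Boolean variables for: the orientation of each of the $15$ pairs; a linear order on the $15$ margin magnitudes, with clauses enforcing totality and transitivity; for every subset $S\subseteq X$ and every $a\in S$, a variable $\mathrm{ssv}(S,a)$ asserting that $a$ is the SSV winner of $\mathbf{P}|_S$; for every $a\in X$, a variable $\mathrm{sc}(a)$ asserting that $a$ is undefeated under SC; and auxiliary reachability variables for the thresholded paths appearing in the SC definition. Clauses then encode the base case $\mathrm{ssv}(\{a\},a)$; the recursive clause relating $\mathrm{ssv}(S,a)$ to the $\mathrm{ssv}(S\setminus\{c\},a)$ together with the ``no earlier match wins'' condition (expressed via the order variables); the path-based definition of $\mathrm{sc}(a)$; and finally the query ``$\mathrm{ssv}(X,a)$ holds for some $a$ with $\neg\,\mathrm{sc}(a)$.'' By the reduction above, this formula is satisfiable iff the conjecture fails on six alternatives, so an UNSAT verdict settles it; a model at $n=7$ yields a counterexample, which I would then make concrete by running Debord's construction to produce an explicit profile and verifying both winner sets directly, and then pad with alternatives that lose to everyone by dominating margins to conclude that the conjecture fails for every $n\ge 7$.

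The main obstacle, and where I would concentrate care, is the SAT side: (i) establishing that the realizability reduction is \emph{tight}, i.e.\ that there is no hidden constraint on ordinal margin graphs beyond being a tournament equipped with an edge order, so that UNSAT genuinely certifies the theorem; and (ii) keeping the encoding small enough to be solved, since the subset-indexed $\mathrm{ssv}$ variables and the reachability gadgets dominate its size and a careless encoding will blow up. On the elementary side, the analogous obstacle is the case explosion at $n=5$ and verifying that the recursion's behavior under alternative deletion has been tracked correctly in every branch.
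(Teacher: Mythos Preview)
Your overall plan is correct and matches the paper's high-level strategy: reduce to linear ordinal margin matrices via Debord-style realizability, prove the conjecture for $n\le 6$, refute it for $n\ge 7$ via a SAT-found counterexample, and then pad with dominated alternatives to extend the counterexample upward. Your SAT encoding, your realizability reduction, and your padding argument are essentially identical to what the paper does.

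The genuine point of divergence is the hand proof for $n\le 5$. You propose a case split on tournament isomorphism classes (four types for $n=4$, twelve for $n=5$), then track the recursion and the relevant margin orderings within each class. The paper instead gives a uniform inductive argument that avoids the tournament taxonomy entirely: assuming the SSV winner $a$ is SC-defeated, it first shows the defeater must be the witness $b$ (since any other defeater would persist in $\mathcal{M}_{-b}$), then builds the recursive sequence $a_0=a$, $a_{n+1}=\mathrm{SSV}(\mathcal{M}_{-a_n})$ and proves the key inequality $(a_n,a_{n+1})\succ(b,a)$ for all $n$. Because $X$ is finite the sequence must loop, and a short analysis of the three possible loop-back points yields a contradiction in each case. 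This buys a proof that is a page long and requires no enumeration of tournament types or edge-order sub-cases; your approach is correct in principle but significantly more laborious, and your remark that ``only the relative order of a handful of key margins matters'' is doing real work that would itself need justification. Conversely, the paper's argument is tightly tuned to $n\le 5$ (it explicitly notes that it does not generalize), whereas your case-analysis framework, once mechanized, is exactly what becomes the SAT encoding at $n=6$---so in a sense your plan is more uniform across all $n$, at the cost of elegance in the small cases.
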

\noindent As explained in Section~\ref{Prelims}, if true, this conjecture would imply that SSV is equivalent to the more complicated (to define) Stable Voting method that has been used in hundreds of elections \cite{HP2024data} at \text{https://stablevoting.org}. This would in turn imply that the simpler SSV definition could be the definition presented to voters, while the (empirically) faster Stable Voting implementation could be used behind the scenes (see \text{https://github.com/epacuit/stablevoting}).
 
  By ``no tied margins'' in Conjecture \ref{ConjInformal}, we mean there are no alternatives $a,b,c,d$ with $a\neq b$, $c\neq d$, and $(a,b)\neq (c,d)$ such that the margin of $a$ vs.~$b$ is equal to the margin of $c$ vs.~$d$. If we instead allow such ties, we must decide how to define Simple Stable Voting to handle them.
 
 Under one way of handling tied margins,  \textit{simultaneous elimination}, the version of the conjecture allowing tied margins was refuted by Yifeng Ding (personal communication) using an exhaustive computer search for counterexamples with up to 5 alternatives; we present one of his counterexamples with 5 alternatives in the Appendix. However, we also show in the Appendix that under another way of handling tied margins, \textit{parallel-universe tiebreaking}, those five-alternative counterexamples disappear, and there is a counterexample with tied margins only if there is one without tied margins. This returns us to the above conjecture, which has proven much more difficult to resolve. No counterexamples were revealed by an exhaustive search up to 6 alternatives or by extensive Monte Carlo sampling with more alternatives.

In this paper, we prove Conjecture \ref{ConjInformal} for up to 6 alternatives, and we refute it for more than 6 alternatives. While our proof of the conjecture for up to 5 alternatives (Section~\ref{5cands}) uses traditional mathematical methods, our 6-alternative proof and 7-alternative counterexample  were obtained with the use of satisfiability (SAT) solving on a computer (Section~\ref{section: SAT encoding}). This provides another example of the power of SAT solving in social choice theory (for precedents, see, e.g., \citealt{Brandt2014}, \citealt{Brandt2016}, \citealt{Brandt2017}, \citealt{Brandt2018}, \citealt{Brandt2022}, \citealt{Kluiving2020}, and \citealt{HNPZ2024}). Moreover, our SAT encoding used for Conjecture~\ref{ConjInformal} is applicable far beyond Split Cycle and SSV: it can be used to test properties of any voting method whose choice of  winners depends only on the ordering of margins of victory between alternatives by size (see Definition~\ref{OrdMarginMatrix}). 

All of the code associated with the paper is available in the repository at https://github.com/chasenorman/ssv-sc.

\section{Preliminaries}\label{Prelims}

To settle Conjecture \ref{ConjInformal} in the following sections, we need some preliminary setup.

First, note that Split Cycle and SSV, like Beat Path, Ranked Pairs, River, Minimax \cite{Simpson1969,Kramer1977}, Weighted Covering \cite{Dutta1999,Fernandez2018}, and all so-called C1 methods \cite{Fishburn1977}, do not  need the numerical values of the margins between alternatives in order to determine the winners; it suffices to have an ordering of pairs of alternatives by margin, as  in the following definition.

\begin{definition}\label{OrdMarginMatrix} An \textit{ordinal margin matrix} is a pair $\mathcal{M}={(X,\succ)}$ of a nonempty finite set $X$ (of alternatives) and a strict weak order (i.e., an asymmetric and negatively transitive relation) $\succ$ on $X^2$ satisfying \textit{ordinal skew-symmetry}:  for all $a,b,c,d\in X$, \[\mbox{$(a,b)\succ (c,d)$ if and only if $(d,c)\succ (b,a)$.}\]

We say $\mathcal{M}$ is \textit{linear} if the restriction of $\succ$ to $X^2\setminus \{(a,a)\mid a\in X\}$ is a linear order. If $(a,b)\succ (c,d)$, we say that $(a,b)$ is \textit{stronger} than $(c,d)$. If $(a,b)\succ (b,a)$, we say that $a$ has a \textit{positive margin over} $b$.

Given an ordinal margin matrix $(X,\succ)$, its \textit{majority graph} $G(X,\succ)$ is the directed graph whose set of nodes is $X$ with an edge from $a$ to $b$ if $(a,b)\succ (b,a)$.
\end{definition}

To relate Definition \ref{OrdMarginMatrix} to the standard setup in voting theory (see \citealt{Zwicker2016}), recall that a \textit{preference profile} is a function $\mathbf{P}:V\to \mathcal{L}(X)$ where $V$ is a nonempty set of voters, $X$ is a nonempty set of alternatives, and $\mathcal{L}(X)$ is the set of all linear orders on $X$.\footnote{Everything to follow holds if we allow strict weak orders instead of linear orders.} Given $a,b\in X$, the \textit{margin of $a$ vs.~$b$ in $\mathbf{P}$}, denoted $\mbox{Margin}_\mathbf{P}(a,b)$, is the number of voters who rank $a$ above $b$ minus the number who rank $b$ above $a$:
\[|\{i \in V\mid (a,b)\in \mathbf{P}(i)\}| - |\{i\in V\mid (b,a)\in \mathbf{P}(i)\}|.\]
Any profile $\mathbf{P}$ induces an ordinal margin matrix $\mathcal{M}(\mathbf{P})=(X,\succ)$ by setting $(a,b)\succ (c,d)$ iff $\mbox{Margin}_\mathbf{P}(a,b)>\mbox{Margin}_\mathbf{P}(c,d)$. Conversely, we have the following representation result, proved in the Appendix.

\begin{lemma}\label{RepThm}
For any ordinal margin matrix $\mathcal{M}$, there is a preference profile $\mathbf{P}$ such that  $\mathcal{M}(\mathbf{P})=\mathcal{M}$.
\end{lemma}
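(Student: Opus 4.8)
The plan is to reduce the claim to a quantitative form of McGarvey's theorem, by first distilling from $\succ$ an integer-valued, skew-symmetric ``target'' array that represents $\succ$, and then building an explicit profile realizing it. \emph{Step 1 (a utility representation of $\succ$).} Since $\succ$ is a strict weak order on the finite set $X^2$, the relation $p\sim q$ defined by ``$p\not\succ q$ and $q\not\succ p$'' is an equivalence relation (transitivity of $\sim$ uses negative transitivity of $\succ$), and $\succ$ descends to a linear order on the finite quotient $X^2/{\sim}$. Let $u\colon X^2\to\mathbb{Z}$ map each pair to the number of $\sim$-classes strictly below its own class; then $(a,b)\succ(c,d)\Leftrightarrow u(a,b)>u(c,d)$ and $(a,b)\sim(c,d)\Leftrightarrow u(a,b)=u(c,d)$.

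\emph{Step 2 (skew-symmetrization).} Put $v(a,b)=u(a,b)-u(b,a)$, so $v(b,a)=-v(a,b)$ and $v(a,a)=0$ for free. The key point, and the one place where \emph{ordinal skew-symmetry} of $\succ$ is actually used, is that $v$ still represents $\succ$: if $(a,b)\succ(c,d)$ then $u(a,b)>u(c,d)$, while ordinal skew-symmetry gives $(d,c)\succ(b,a)$ and hence $u(d,c)>u(b,a)$; adding these two strict inequalities yields $v(a,b)>v(c,d)$. Symmetrically $(c,d)\succ(a,b)$ forces $v(c,d)>v(a,b)$, and if $(a,b)\sim(c,d)$ then ordinal skew-symmetry also forces $(d,c)\sim(b,a)$, so $u(a,b)=u(c,d)$ and $u(d,c)=u(b,a)$, giving $v(a,b)=v(c,d)$. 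These three cases are exhaustive with mutually exclusive conclusions, so $v(a,b)>v(c,d)\Leftrightarrow(a,b)\succ(c,d)$.

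\emph{Step 3 (realizing $2v$ by McGarvey's construction).} Consider the target margins $m(a,b)=2v(a,b)$, an integer-valued skew-symmetric array with all entries even. For each ordered pair $(a,b)$ with $v(a,b)>0$, enumerate $X\setminus\{a,b\}$ as $c_1,\dots,c_k$ and insert $v(a,b)$ copies of each of the two ballots $a>b>c_1>\cdots>c_k$ and $c_k>\cdots>c_1>a>b$; one copy of each contributes $+2$ to $\mbox{Margin}(a,b)$ and $0$ to the margin of every ordered pair $(x,y)$ with $\{x,y\}\neq\{a,b\}$. Doing this for all such $(a,b)$ produces a profile $\mathbf{P}$ with $\mbox{Margin}_{\mathbf{P}}(a,b)=m(a,b)=2v(a,b)$ for all $a,b$ (the reversed and diagonal pairs follow from skew-symmetry of margins). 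Since $x\mapsto2x$ is strictly increasing, $\mbox{Margin}_{\mathbf{P}}(a,b)>\mbox{Margin}_{\mathbf{P}}(c,d)\Leftrightarrow v(a,b)>v(c,d)\Leftrightarrow(a,b)\succ(c,d)$, i.e.\ $\mathcal{M}(\mathbf{P})=\mathcal{M}$. The small cases cause no trouble: for $|X|=1$ any one-voter profile works, and for $|X|=2$ the list $c_1,\dots,c_k$ is empty but the construction goes through unchanged.

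I expect Step 2 to be the only genuine obstacle. Steps 1 and 3 are, respectively, the standard numerical representation of a weak order and the classical McGarvey gadget (one could equally invoke a parity-controlled version of Debord's theorem), whereas Step 2 is exactly where the axiom of ordinal skew-symmetry on $\succ$ gets converted into honest skew-symmetry of the realized margin array; keeping the bookkeeping with the reversed pairs straight there is the crux.
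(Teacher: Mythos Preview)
Your proof is correct and follows a route that is close in spirit to the paper's but cleaner in one key place. Both arguments proceed by (i) extracting from $\succ$ an integer-valued skew-symmetric array that represents $\succ$, and (ii) realizing that array as the margin matrix of a profile. For (ii), the paper invokes Debord's theorem as a black box, whereas you spell out the McGarvey gadget explicitly; these are interchangeable. The real difference is in (i): the paper builds the weights directly by first ranking the ``non-negative'' pairs $0,2,4,\dots$ and then negating for the ``negative'' pairs, which forces a three-case verification (both non-negative, mixed sign, both negative) that the induced order matches $\succ$. Your symmetrization $v(a,b)=u(a,b)-u(b,a)$ collapses that case analysis into a single short computation: one application of ordinal skew-symmetry plus one addition of inequalities. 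That is a genuine simplification.

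One tiny edge case to patch: if $\succ$ is empty (every off-diagonal pair is $\sim$-equivalent), Step~3 as written produces a profile with no voters, contrary to the paper's requirement that $V$ be nonempty. Just append one ballot and its reverse (or, for $|X|=1$, any single ballot); this contributes zero to every margin and fixes the issue without affecting anything else.
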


Let us now define Split Cycle on ordinal margin matrices.

\begin{definition}\label{SCdef} Given an ordinal margin matrix $\mathcal{M}=(X,\succ)$ and $a,b\in X$, we say that $a$ \textit{defeats} $b$ if
\begin{enumerate}
\item\label{SCdef1} $(a,b)\succ(b,a)$ and 
\item\label{SCdef2} there is no cycle in the majority graph $G(X,\succ)$ such that $(a,b)$ is an edge that is weakest\footnote{Possibly tied for weakest, if $\succ$ is not linear.} in the cycle according to $\succ$. 
\end{enumerate}
Then the \textit{Split Cycle winners} for $\mathcal{M}$ are the alternatives that are not defeated by any other alternative.\end{definition}

We recall the following from \citealt{HP2023}. 

\begin{lemma} For any ordinal margin matrix $\mathcal{M}=(X,\succ)$,  there is at least one Split Cycle winner.
\end{lemma}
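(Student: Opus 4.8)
The plan is to reduce the statement to a single structural fact: the Split Cycle \emph{defeat} relation of Definition~\ref{SCdef} is acyclic on $X$. Once that is established, finiteness of $X$ does the rest. The starting observation is that if $a$ defeats $b$, then by clause~\ref{SCdef1} we have $(a,b)\succ(b,a)$, i.e.\ there is an edge from $a$ to $b$ in the majority graph $G(X,\succ)$. Consequently every cycle in the defeat relation is in particular a cycle in $G(X,\succ)$, which is what lets clause~\ref{SCdef2} come into play.

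So suppose toward a contradiction that there are distinct alternatives $a_1,\dots,a_k$ with $k\ge 2$ such that $a_i$ defeats $a_{i+1}$ for $i<k$ and $a_k$ defeats $a_1$. The case $k=2$ is impossible, since it would require $(a_1,a_2)\succ(a_2,a_1)$ and $(a_2,a_1)\succ(a_1,a_2)$, contradicting the asymmetry of $\succ$. Hence $k\ge 3$, and the edges $(a_1,a_2),(a_2,a_3),\dots,(a_k,a_1)$ form a cycle $C$ in $G(X,\succ)$. Since $\succ$ is a strict weak order, the relation $x\sim y$ (neither $x\succ y$ nor $y\succ x$) is an equivalence relation whose classes are linearly ordered by $\succ$; therefore among the finitely many edges of $C$ at least one, say $(a_i,a_{i+1})$, lies in the lowest occurring class, i.e.\ is weakest (possibly tied for weakest) in $C$ according to $\succ$. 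But then clause~\ref{SCdef2} fails for the pair $(a_i,a_{i+1})$, so $a_i$ does not defeat $a_{i+1}$ — a contradiction. The defeat relation is also irreflexive (no $a$ defeats $a$, by asymmetry of $\succ$ at $(a,a)$), so it is acyclic. The only delicate point in this argument, and the one I would be most careful about, is the non-linear case: one must confirm that a finite cycle always has a $\succ$-weakest edge even when there are ties among edge strengths, which is precisely what negative transitivity of $\succ$ guarantees.

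Finally, since $X$ is finite and nonempty and defeat is acyclic on $X$, some $a\in X$ is not defeated by any alternative: otherwise, following defeat-edges backwards from an arbitrary alternative and invoking finiteness (the pigeonhole principle) would produce a defeat cycle. Such an $a$ is by definition a Split Cycle winner, which completes the proof.
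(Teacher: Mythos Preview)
Your proof is correct and follows essentially the same approach as the paper: both argue that a cycle of defeats would be a cycle in $G(X,\succ)$ whose $\succ$-weakest edge cannot be a defeat by clause~\ref{SCdef2}, and then use finiteness of $X$ to extract an undefeated alternative. Your version is simply more explicit about the auxiliary points (the $k=2$ case, irreflexivity, and why a weakest edge exists when $\succ$ is merely a strict weak order), while the paper compresses these into a two-sentence argument.
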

\begin{proof} Suppose for contradiction that every alternative is defeated in the sense of Definition \ref{SCdef}. Then since the set of alternatives is finite, it follows that there is a cycle of defeats in $G(X,\succ)$. But this cycle must have some $\succ$-minimal edge, in which case by Definition \ref{SCdef}.\ref{SCdef2} that edge is not a defeat, so we have a contradiction.\end{proof}

To define SSV, we need to define the result of deleting an alternative.

\begin{definition} Given an ordinal margin matrix $\mathcal{M}={(X,\succ)}$ with $|X|>1$ and $b\in X$, define $\mathcal{M}_{-b}=(X_{-b}, \succ_{-b})$ where $X_{-b}={X\setminus\{b\}}$ and $\succ_{-b}$ is the restriction of $\succ$ to $(X_{-b})^2$.\end{definition}
\noindent Clearly $\mathcal{M}_{-b}$ is also an ordinal margin matrix.

\begin{definition}\label{def:SSV} For any linear ordinal margin matrix $\mathcal{M}$, the Simple Stable Voting (SSV) winners are defined recursively:
\begin{enumerate}
\item if $\mathcal{M}$ has only one alternative, this alternative is the SSV winner in $\mathcal{M}$;
\item if $\mathcal{M}$ has more than one alternative, then the SSV winner in $\mathcal{M}$ is the first coordinate of the $\succ$-maximal pair of alternatives $(a,b)$ such that $a$ is the SSV winner in $\mathcal{M}_{-b}$.
\end{enumerate} 
\end{definition}

We recall the following from \citealt{HP2023b}.

\begin{lemma}\label{UniqueSSV} For every linear ordinal margin matrix $\mathcal{M}=(X,\succ)$, there is a unique SSV winner.
\end{lemma}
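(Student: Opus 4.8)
The plan is to prove Lemma~\ref{UniqueSSV} by strong induction on $|X|$. The base case $|X|=1$ is immediate: by clause~(1) of Definition~\ref{def:SSV}, the unique alternative is the SSV winner, and there is obviously no other candidate for winner. For the inductive step, suppose the claim holds for all linear ordinal margin matrices with fewer than $n$ alternatives, and let $\mathcal{M}=(X,\succ)$ with $|X|=n>1$. The crucial observation is that for each $b\in X$, the deleted matrix $\mathcal{M}_{-b}$ is again a linear ordinal margin matrix (this was noted in the excerpt, and linearity is inherited since a restriction of a linear order is a linear order), so by the inductive hypothesis $\mathcal{M}_{-b}$ has a \emph{unique} SSV winner, which I will call $w(b)$.

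\emph{Existence.} Consider the set $S=\{(a,b)\in X^2 : a\neq b \text{ and } a = w(b)\}$. This set is nonempty: pick any $b\in X$ and let $a=w(b)$; since $\mathcal{M}_{-b}$ has at least one alternative and $b\notin X_{-b}$, we have $a\neq b$, so $(a,b)\in S$. Now $\succ$ restricted to $X^2\setminus\{(x,x)\mid x\in X\}$ is a linear order, hence a strict total order on this finite set, so the finite nonempty set $S$ has a $\succ$-maximum element $(a^\ast,b^\ast)$. By clause~(2) of Definition~\ref{def:SSV}, $a^\ast$ is an SSV winner in $\mathcal{M}$.

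\emph{Uniqueness.} Suppose $a_1$ and $a_2$ are both SSV winners in $\mathcal{M}$. By clause~(2), there exist $b_1,b_2$ such that $(a_i,b_i)$ is the $\succ$-maximal pair in $S$ with first coordinate $a_i$ and second coordinate satisfying $a_i=w(b_i)$ — more precisely, $a_i=w(b_i)$ and $(a_i,b_i)$ is $\succ$-maximal among all pairs $(a,b)$ with $a=w(b)$. But "$\succ$-maximal pair in $S$" is unambiguous since $\succ$ linearly orders $S$: there is exactly one $\succ$-maximum, so $(a_1,b_1)=(a_2,b_2)$ and in particular $a_1=a_2$.

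The only subtlety — and the place to be slightly careful in writing — is the exact reading of "the $\succ$-maximal pair of alternatives $(a,b)$ such that $a$ is the SSV winner in $\mathcal{M}_{-b}$" in Definition~\ref{def:SSV}: one must confirm that this phrase refers to a genuinely unique pair, which requires precisely (i) that the relevant set $S$ is nonempty and (ii) that $\succ$ is a strict total order on the non-diagonal pairs so that a maximum, if it exists, is unique, and exists by finiteness. Both (i) and (ii) are handled above, with (i) relying on the inductive hypothesis to guarantee $w(b)$ is well defined. I do not anticipate a real obstacle here; the argument is a routine well-foundedness/finiteness induction, and the main thing to get right is simply to state the reading of the definition cleanly so that "unique SSV winner" is seen to follow from "unique $\succ$-maximum of a nonempty finite totally ordered set."
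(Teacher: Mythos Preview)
Your proof is correct and follows essentially the same approach as the paper's: induction on $|X|$, using the inductive hypothesis to guarantee that each $\mathcal{M}_{-b}$ has a (unique) SSV winner, and then invoking linearity of $\succ$ on off-diagonal pairs to conclude that the finite nonempty set of witnessing pairs has a unique $\succ$-maximum. The paper's proof is just a terser version of what you wrote.
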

\begin{proof} By induction on $|X|$. The base case where $|X|=1$ is immediate. Suppose $|X|=n+1$. For any $c\in X$, there is an SSV winner in $\mathcal{M}_{-c}$ by the inductive hypothesis. Then since $\succ$ is a linear order, there is a unique $\succ$-maximal pair $(a,b)$ such that $a$ is the SSV winner in $\mathcal{M}_{-b}$, so $a$ is the SSV winner in~$\mathcal{M}$.\end{proof}

We can now state Conjecture \ref{ConjInformal} more formally.

\setcounter{conjecture}{0}
\begin{conjecture}[Conjecture~3.21 of \citealt{HP2023}]\label{Conj} For every linear ordinal margin matrix, the SSV winner is among the Split Cycle winners.
\end{conjecture}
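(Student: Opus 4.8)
The natural plan is induction on $|X|$. For $|X|\le 2$ the claim is immediate (the $\succ$-maximal pair $(a,b)$ has $a$ beating $b$, and with a single opponent $a$ cannot be defeated), and for $|X|=3$ it is essentially Proposition~1 of \citealt{HP2023b}: the SSV winner of a three-element linear matrix is the unique alternative with no loss, if there is one, and otherwise the unique alternative with the smallest loss, and on each of the two tournament shapes on three alternatives (transitive, or a $3$-cycle) this alternative is readily seen to be undefeated in the sense of Definition~\ref{SCdef}.

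For the inductive step, let $a$ be the SSV winner of $\mathcal{M}=(X,\succ)$, witnessed by the $\succ$-maximal pair $(a,b)$ with $a$ the SSV winner of $\mathcal{M}_{-b}$; by the inductive hypothesis $a$ is a Split Cycle winner of $\mathcal{M}_{-b}$, that is, undefeated there. Suppose for contradiction that some $c$ defeats $a$ in $\mathcal{M}$. Observe that $G(\mathcal{M}_{-b})$ is exactly the subgraph of $G(\mathcal{M})$ induced on $X\setminus\{b\}$ and that $\succ_{-b}$ is the restriction of $\succ$, so every cycle of $G(\mathcal{M}_{-b})$ is a cycle of $G(\mathcal{M})$ with the same induced ordering of its edges. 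Hence if $c\ne b$, then $c$ beats $a$ head-to-head already in $\mathcal{M}_{-b}$, so, $a$ being undefeated there, some cycle of $G(\mathcal{M}_{-b})$ has $(c,a)$ as its (strictly) weakest edge; that same cycle sits inside $G(\mathcal{M})$ and witnesses that $c$ does not defeat $a$ in $\mathcal{M}$ --- a contradiction. So the only surviving possibility is $c=b$: the deleted alternative $b$ beats $a$ (so $(b,a)\succ(a,b)$) and, since $b$ defeats $a$, there is no cycle of $G(\mathcal{M})$ with $(b,a)$ weakest --- equivalently, no directed path from $a$ to $b$ using only edges strictly $\succ$-stronger than $(b,a)$.

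It remains to construct such a path, which would contradict the assumption and finish the argument. The ingredients at hand are: (i) $a$ is undefeated in $\mathcal{M}_{-b}$, so for each $z\notin\{a,b\}$ that beats $a$ there is a cycle of $G(\mathcal{M}_{-b})$ with $(z,a)$ weakest, in particular a strong path from $a$ to $z$; (ii) the $\succ$-maximality of $(a,b)$, which in particular (applied to the pair $(b,a)\succ(a,b)$) says $b$ is not the SSV winner of $\mathcal{M}_{-a}$, and more generally forbids, for every pair $(x,y)\succ(a,b)$, that $x$ be the SSV winner of $\mathcal{M}_{-y}$, which via the inductive hypothesis constrains $x$'s position in the Split Cycle structure of $\mathcal{M}_{-y}$; and (iii) Proposition~1 of \citealt{HP2023b}, which pins down the SSV winner of every three-element sub-matrix. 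For $|X|\in\{4,5\}$ I would carry out a finite case analysis on the local shape of $G(\mathcal{M})$ --- which alternatives $a$ beats, which beat $a$, and how these relate to $b$ --- and in each case splice a strong $a$-to-$b$ path out of the witness paths from (i), using (ii) and (iii) to force the $\succ$-comparisons that make the splice go through.

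The case $c=b$ is where I expect essentially all the difficulty to lie: there is no single cycle that works uniformly, the number of tournament shapes to dispatch grows rapidly, and it is not clear a priori that a suitable strong $a$-to-$b$ path must always exist. So for $|X|=6$ I would not attempt the case analysis by hand but instead encode the whole question --- ``does there exist a linear ordinal margin matrix on six alternatives whose SSV winner is defeated under Split Cycle?'' --- as a Boolean satisfiability instance: variables recording the linear order of the margins (Definition~\ref{OrdMarginMatrix}), clauses expressing the recursive SSV computation (Definition~\ref{def:SSV}) and the Split Cycle defeat relation (Definition~\ref{SCdef}), and let a SAT solver settle it. I would run the same encoding at seven alternatives; given how fragile the hand argument already feels in the case $c=b$, I would be prepared for the solver to return SAT there rather than UNSAT --- that is, for Conjecture~\ref{Conj} to fail once $|X|\ge 7$, with the model providing an explicit counterexample.
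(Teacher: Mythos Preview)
Your proposal is correct and follows essentially the same architecture as the paper: induction on $|X|$, reduction to the single case $c=b$ (your argument for $c\neq b$ is exactly the paper's), then a construction aimed at producing a strong $a$-to-$b$ path for $|X|\le 5$, SAT for $|X|=6$, and a SAT-found counterexample at $|X|=7$ --- which is indeed what happens, so the conjecture is false in general.

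The one place where the paper's execution is sharper than your plan is the $|X|\in\{4,5\}$ step. Rather than case-splitting on tournament shapes, the paper organizes everything around the recursive sequence $a_0=a$, $a_{n+1}=\text{the SSV winner of }\mathcal{M}_{-a_n}$, and uses the $\succ$-maximality of $(a,b)$ to show uniformly that $(a_n,a_{n+1})\succ(b,a)$ for all $n$. The sequence must eventually loop, and there are only three loop targets to consider ($b$, $a$, or $a_1$); the first two cases immediately yield the desired strong path or cycle, and only the third (which requires five alternatives) takes a short further argument. This packages your ingredients (i)--(iii) into a single device and avoids enumerating tournaments. Your ``splice witness paths, case on shape'' plan would work too, but the sequence trick is what keeps the $\le 5$ proof to under a page.
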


This conjecture, if true, would also settle the relation between SSV and what Holliday and Pacuit \shortcite{HP2023} call the Stable Voting (SV) method. SV is defined just like SSV, except that in part 2 of Definition \ref{def:SSV}, the definition of SV replaces ``such that $a$ is the SSV winner in $\mathcal{M}_{-b}$'' with ``such that $a$ is a Split Cycle winner in $\mathcal{M}$ and the SV winner in $\mathcal{M}_{-b}$,'' so SV refines Split Cycle by definition. 

\begin{lemma} For any positive integer $n$, if the restriction of Conjecture~\ref{Conj} up to $n$ alternatives holds, then SV is equivalent to SSV for linear ordinal margin matrices up to $n$ alternatives.\end{lemma}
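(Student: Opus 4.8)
The plan is to prove the statement by induction on the number of alternatives $k=|X|$, ranging over $1\le k\le n$, under the hypothesis that Conjecture~\ref{Conj} holds for every linear ordinal margin matrix with at most $n$ alternatives. For a fixed linear ordinal margin matrix $\mathcal{M}=(X,\succ)$, write $S_{\mathrm{SSV}}(\mathcal{M})$ for the set of pairs $(a,b)\in X^2$ with $a\neq b$ such that $a$ is the SSV winner in $\mathcal{M}_{-b}$, and $S_{\mathrm{SV}}(\mathcal{M})$ for the set of pairs $(a,b)$ with $a\neq b$ such that $a$ is a Split Cycle winner in $\mathcal{M}$ and $a$ is the SV winner in $\mathcal{M}_{-b}$. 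By Definition~\ref{def:SSV}, the SSV winner in $\mathcal{M}$ is the first coordinate of the $\succ$-maximal element of $S_{\mathrm{SSV}}(\mathcal{M})$, and, by the definition of SV recalled just above, the SV winner in $\mathcal{M}$ is the first coordinate of the $\succ$-maximal element of $S_{\mathrm{SV}}(\mathcal{M})$. Since $\mathcal{M}$ is linear, $\succ$ restricted to pairs of distinct alternatives is a linear order on a finite set, so these maxima exist whenever the sets are nonempty; $S_{\mathrm{SSV}}(\mathcal{M})$ is nonempty by Lemma~\ref{UniqueSSV} (and note that every pair in either set has distinct coordinates, since $a$ cannot be a winner in $\mathcal{M}_{-a}$). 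The base case $k=1$ is immediate: SV and SSV both return the unique alternative.

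For the inductive step, fix $k$ with $1<k\le n$ and assume SV and SSV agree on all linear ordinal margin matrices with fewer than $k$ alternatives. Let $\mathcal{M}=(X,\succ)$ have $|X|=k$. Each $\mathcal{M}_{-b}$ is a linear ordinal margin matrix with $k-1<k$ alternatives, so by the inductive hypothesis $a$ is the SV winner in $\mathcal{M}_{-b}$ iff $a$ is the SSV winner in $\mathcal{M}_{-b}$. Hence $S_{\mathrm{SV}}(\mathcal{M})=\{(a,b)\in S_{\mathrm{SSV}}(\mathcal{M}): a \text{ is a Split Cycle winner in } \mathcal{M}\}\subseteq S_{\mathrm{SSV}}(\mathcal{M})$. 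Let $(a^{*},b^{*})$ be the $\succ$-maximal element of $S_{\mathrm{SSV}}(\mathcal{M})$, so $a^{*}$ is the SSV winner in $\mathcal{M}$. Because $|X|=k\le n$, Conjecture~\ref{Conj} applies to $\mathcal{M}$ itself and yields that $a^{*}$ is a Split Cycle winner in $\mathcal{M}$; therefore $(a^{*},b^{*})\in S_{\mathrm{SV}}(\mathcal{M})$.

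To conclude, I would invoke the elementary fact that if $T\subseteq U$ are subsets of a linearly ordered set, $U$ has a greatest element, and that element lies in $T$, then it is also the greatest element of $T$. Applying this with $T=S_{\mathrm{SV}}(\mathcal{M})$, $U=S_{\mathrm{SSV}}(\mathcal{M})$, and greatest element $(a^{*},b^{*})$, we get that $(a^{*},b^{*})$ is the $\succ$-maximal element of $S_{\mathrm{SV}}(\mathcal{M})$ (in particular $S_{\mathrm{SV}}(\mathcal{M})\neq\emptyset$, so the SV winner is well defined here). Hence the SV winner in $\mathcal{M}$ equals $a^{*}$, the SSV winner in $\mathcal{M}$, closing the induction.

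The argument involves no hard combinatorics; the only point requiring care is keeping straight where each hypothesis is used. The inductive hypothesis (SV $=$ SSV below size $k$) is used purely to rewrite the defining condition of SV on the strictly smaller matrices $\mathcal{M}_{-b}$, whereas Conjecture~\ref{Conj} is used at the full size $k$, precisely to certify that the SSV winner of $\mathcal{M}$ survives the extra Split Cycle constraint built into SV. This is exactly why the hypothesis must be "up to $n$" rather than merely "up to $n-1$": without the conjecture at size $k$ itself, the pair $(a^{*},b^{*})$ could fail the Split Cycle test, and the $\succ$-maximal element of $S_{\mathrm{SV}}(\mathcal{M})$ might then be a strictly $\succ$-smaller pair with a different first coordinate.
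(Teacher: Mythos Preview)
Your proof is correct and follows essentially the same approach as the paper's: induction on the number of alternatives, using the inductive hypothesis to identify the SV and SSV winners in each $\mathcal{M}_{-b}$, and using Conjecture~\ref{Conj} at the current size to ensure the $\succ$-maximal SSV witness pair survives the added Split Cycle constraint in the definition of SV. Your write-up is more explicit than the paper's (in particular, your subset-plus-maximum argument spells out the step the paper compresses into a single clause), but the underlying idea is identical.
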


\begin{proof} By induction on $n$, with a trivial base case. For induction, assume SV and SSV agree on $n$ alternative elections. Then for any $\mathcal{M}$ with $n+1$ alternatives, the SSV winner in $\mathcal{M}$, i.e., the first member of the $\succ$-maximal pair  $(a,b)$ such that $a$ is the SSV winner in $\mathcal{M}_{-b}$, is the same, assuming Conjecture~\ref{Conj} for $n+1$ alternatives, as the first member of the $\succ$-maximal pair  $(a,b)$ such that $a$ is a Split Cycle winner in $\mathcal{M}$ and the SSV winner in $\mathcal{M}_{-b}$, which is the same, by the inductive hypothesis, as the first member of the $\succ$-maximal pair $(a,b)$ such that $a$ is a Split Cycle winner in $\mathcal{M}$ and the SV winner in $\mathcal{M}_{-b}$, i.e., the SV winner in~$\mathcal{M}$.\end{proof}

\section{Proof of Conjecture for $\leq5$ Alternatives}\label{5cands}

Our first main result is a proof of Conjecture \ref{Conj} for up to 5 alternatives.

\begin{theorem}\label{prop:proof} For any linear ordinal margin matrix $\mathcal{M}$ with up to 5 alternatives, the SSV winner is a Split Cycle winner.
\end{theorem}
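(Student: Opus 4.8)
The plan is to argue by induction on $n=|X|$, proving at the same time that SSV is Condorcet-consistent on linear ordinal margin matrices with at most $n$ alternatives (the SSV winner is the Condorcet winner whenever one exists) and that the SSV winner is a Split Cycle winner. The cases $n\le 2$ are immediate. For $n=3$, combine the ``smallest-loss'' description of the three-alternative SSV winner from \citealt{HP2023b} with the fact that for three alternatives the Split Cycle winner is the Condorcet winner if there is one, and otherwise the head of the $\succ$-weakest edge of the unique $3$-cycle --- which is exactly the alternative with the smallest loss. Condorcet-consistency for larger $n$ then follows from the recursion in Definition~\ref{def:SSV}: if $a$ is the Condorcet winner of $\mathcal M$, it is the Condorcet (hence, by the inductive hypothesis, SSV) winner of every $\mathcal M_{-v}$ with $v\ne a$, and ordinal skew-symmetry shows that no pair $(c,d)$ with $c\ne a$ that is $\succ$-above every pair $(a,v)$ can satisfy ``$c$ is the SSV winner of $\mathcal M_{-d}$'' (take $d=a$, then $v=c$, and use that $a$ beats $c$).

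For the inductive step with $n\in\{4,5\}$, let $a$ be the SSV winner of $\mathcal M=(X,\succ)$, witnessed by the $\succ$-maximal pair $(a,b)$ such that $a$ is the SSV winner of $\mathcal M_{-b}$; by the inductive hypothesis $a$ is undefeated in $G(X_{-b},\succ_{-b})$. The first step is a reduction: since adding $b$ back to $\mathcal M_{-b}$ only creates new cycles (it deletes no old cycle and changes no edge not incident to $b$), every edge into $a$ that was a non-defeat in $\mathcal M_{-b}$ --- and, $a$ being undefeated there, that is every edge into $a$ present in $\mathcal M_{-b}$ --- is still a non-defeat in $\mathcal M$. Hence $a$ can fail to be a Split Cycle winner of $\mathcal M$ only if $b\to a$ is an edge of $G(X,\succ)$ that is a defeat, i.e., no cycle of $G(X,\succ)$ has $b\to a$ as its $\succ$-weakest edge. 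So it suffices to exhibit such a cycle.

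To locate it, I would harvest inequalities from the $\succ$-maximality of $(a,b)$: for each $v\ne b$, writing $w_v$ for the SSV winner of $\mathcal M_{-v}$, the (valid) pair $(w_v,v)$ is $\succ$-below $(a,b)$, and $w_v$ is a Split Cycle winner of $\mathcal M_{-v}$ by the inductive hypothesis. Taking $v=a$ gives, via skew-symmetry and $(b,a)\succ(a,b)$, an edge $a\to w_a$ with $\mathrm{Margin}(a,w_a)>\mathrm{Margin}(b,a)$; taking other $v$ either forces further edge orientations near $a$ and $b$, or --- when $w_v$ is the head of a weakest cycle edge --- immediately produces a short cycle through $b\to a$ in which $b\to a$ is weakest, or else produces a pair $\succ$-above $(a,b)$, contradicting maximality. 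Running through the possible shapes of the majority tournament on $X$ (the cases with a Condorcet winner being vacuous by Condorcet-consistency) and, in each shape, combining these constraints, one finds a $3$-, $4$-, or (for $n=5$) $5$-cycle through $b\to a$ all of whose other edges are forced to be $\succ$-stronger than $b\to a$. For $n=4$ there are only four tournament types and the bookkeeping is light; a representative branch is: $a$ is the Condorcet winner of $\mathcal M_{-b}$, so $a\to c$ and $a\to d$; $v=a$ gives $a\to w_a$ with $w_a\in\{c,d\}$ and $\mathrm{Margin}(a,w_a)>\mathrm{Margin}(b,a)$; then the instances $v=c$ and $v=d$ pin down the remaining orientations and margins so that one of the $3$-cycle $b\to a\to w_a\to b$ and the $4$-cycle $b\to a\to c\to d\to b$ has $b\to a$ as its weakest edge.

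The main obstacle is precisely the case analysis at $n=5$: organizing the twelve majority-tournament shapes --- and, within each, the relative order of the positive margins --- so that the inequalities coming from maximality of $(a,b)$ together with the inductive hypothesis applied to the $4$-alternative sub-matrices always suffice to certify that some cycle through $b\to a$ has $b\to a$ weakest. Taming this explosion, rather than any one clever device, is the heart of the proof, and it is what motivates delegating the analogous $n=6$ argument to a SAT solver.
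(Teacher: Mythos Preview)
Your proposal correctly isolates the crucial reduction --- if the SSV winner $a$, witnessed by the pair $(a,b)$, is SC-defeated at all, the defeater must be $b$ --- and you correctly see that the $\succ$-maximality of $(a,b)$ yields margin inequalities via the SSV winners $w_v$ of the various $\mathcal{M}_{-v}$. The paper opens the same way.

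Where the paper departs from your plan is in how it organizes the rest. Rather than split on tournament isomorphism types (four at $n=4$, twelve at $n=5$, each with further sub-cases on the margin ordering) and separately prove Condorcet-consistency to prune some of them, the paper \emph{iterates} your one-step construction: set $a_0=a$ and $a_{k+1}=$ the SSV winner of $\mathcal{M}_{-a_k}$, so that $a_1=w_a$, $a_2=w_{a_1}$, and so on. The very maximality argument you apply once then shows that every edge $a_k\to a_{k+1}$ is strictly stronger than $b\to a$. Finiteness forces the sequence $b,a_0,a_1,\dots$ to revisit an earlier term, and since a $2$-cycle of positive edges is impossible, for $n\le 5$ the first repeat can only be at $b$, at $a$, or at $a_1$. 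The first two cases yield immediate contradictions (a strong path from $a$ to $b$; respectively, $a$ SC-undefeated in a sub-matrix that still contains $b$), and the third --- which can only arise at $n=5$ --- is handled by a short further argument invoking the inductive hypothesis twice more.

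Your route is viable in principle, but as written it is only an outline: the tournament-shape enumeration you call ``the heart of the proof'' is never carried out, and even your representative $n=4$ branch ends in an assertion (``the instances $v=c$ and $v=d$ pin down \dots'') rather than an argument. The paper's loop-back device collapses your twelve-type split into a uniform three-case argument and actually completes it; that device is the organizing idea your proposal is missing.
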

\begin{proof} By induction on the number of alternatives, with a trivial base case. Let $a$ be the SSV winner in $\mathcal{M}$ as witnessed by the match $a$ vs.~$b$, so $a$ is the SSV winner in $\mathcal{M}_{-b}$. Suppose for contradiction that $a$ is defeated according to Split Cycle (SC-defeated), say by $c$. We must have $c=b$. For if $c\neq b$, then $c$ still SC-defeats $a$ in $\mathcal{M}_{-b}$, so by the inductive hypothesis, $a$ is not the SSV winner in $\mathcal{M}_{-b}$, contradicting our assumption. Thus, we conclude that $b$ SC-defeats~$a$, which implies that $b$ has a positive margin over $a$.

The key idea of the proof is to recursively define
\begin{eqnarray*}
a_0 &=& a \\
a_{n+1} & = & \mbox{the SSV winner in }\mathcal{M}_{-a_n}.
\end{eqnarray*}
Since the match $a$ vs.~$b$ witnesses $a$ being the SSV winner, we claim that for all $n$,
\begin{equation}
(a_n,a_{n+1})\succ(b,a)\label{StrengthFact}
\end{equation}
and hence $a_n$ has a positive margin over $a_{n+1}$. Suppose for contradiction that the margin of $a_n$ over $a_{n+1}$ is smaller than that of $b$ over $a$. Then by ordinal skew-symmetry, the margin of $a_{n+1}$ over $a_n$ is larger than that of $a$ over $b$, in which case we consider the match $a_{n+1}$ vs.~$a_n$ before the match $a$ vs.~$b$ in step 2 of Definition~\ref{def:SSV}. Then because $a_{n+1}$ is the SSV winner in $\mathcal{M}_{-a_n}$, the match $a$ vs.~$b$ is not the first match in the list of matches by descending margin such that the first candidate wins after removing the second, contradicting our assumption at the beginning of the proof.

Since $\mathcal{M}$ is finite, the sequence $b,a_0,a_1,\dots$ must loop back onto itself. Since the shortest cycle is of length~3, there are only three cases to consider.

Case 1: the sequence loops back at $b$, as shown below:

\begin{tikzpicture}

\node  at (0,0) (a) {$a$}; 
\node  at (3,0) (b) {$b$}; 
\node at (1.5,1) (c) {$a_1$}; 

\path[->,draw,very thick] (c) to node {} (b);
\path[->,draw,thick] (b) to node  {} (a);
\path[->,draw,very thick] (a) to node  {} (c);

\end{tikzpicture}\quad \begin{tikzpicture}

\node  at (0,0) (a) {$a$}; 
\node  at (3,0) (b) {$b$}; 
\node at (0,1) (c) {$a_1$}; 
\node at (3,1) (d) {$a_2$}; 

\path[->,draw,very thick] (c) to node {} (d);
\path[->,draw,very thick] (d) to node {} (b);
\path[->,draw,thick] (b) to node  {} (a);
\path[->,draw,very thick] (a) to node  {} (c);

\node at (3.6,0) (e) {etc.};

\end{tikzpicture}

\noindent Then we have a path from $a$ to $b$ in the graph such that the margin of each edge is greater than that of the edge from $b$ to $a$ by (\ref{StrengthFact}). But this contradicts the fact that $b$ SC-defeats $a$.

Case 2: the sequence loops back at $a$, as shown below:

\begin{tikzpicture}

\node  at (0,0) (a) {$a$}; 
\node  at (3,0) (b) {$b$}; 
\node at (-1.5,1) (c) {$a_1$}; 
\node at (1.5,1) (d) {$a_2$}; 

\path[->,draw,very thick] (c) to node {} (d);
\path[->,draw,very thick] (d) to node {} (a);
\path[->,draw,thick] (b) to node  {} (a);
\path[->,draw,very thick] (a) to node  {} (c);

\end{tikzpicture}\quad \begin{tikzpicture} \node at (0,0) (a) {etc.};
\end{tikzpicture}

\noindent That is, $a$ is the SSV winner in $\mathcal{M}_{-a_k}$ for some $k>0$. Then by the inductive hypothesis, $a$ is SC-undefeated in $\mathcal{M}_{-a_k}$, so from the fact that $b$ has a positive margin over $a$ in $\mathcal{M}_{-a_k}$, it follows that this is the weakest margin in a cycle containing $a$ and $b$ in $\mathcal{M}_{-a_k}$ and hence in $\mathcal{M}$, contradicting that $b$ SC-defeats $a$~in~$\mathcal{M}$.

Case 3: the sequence loops back at $a_1$, as shown below:
\begin{tikzpicture}

\node  at (0,0) (a) {$a$}; 
\node  at (3,0) (b) {$b$}; 
\node at (-3,0) (c) {$a_1$}; 
\node at (-4.5,1) (d) {$a_2$}; 
\node at (-1.5,1) (e) {$a_3$}; 

\path[->,draw,very thick] (c) to node {} (d);
\path[->,draw,very thick] (d) to node {} (e);
\path[->,draw,very thick] (e) to node {} (c);
\path[->,draw,thick] (b) to node  {} (a);
\path[->,draw,very thick] (a) to node  {} (c);

\end{tikzpicture}

\noindent In $\mathcal{M}_{-a_3}$, the SSV winner is $a_1$. Hence by the inductive hypothesis, $a_1$ is SC-undefeated in $\mathcal{M}_{-a_3}$, so there must be a path from $a_1$ back to $a$ whose edges have stronger margins than the margin of $a$ over $a_1$, which recall is stronger than the margin of $b$ over $a$. Hence we have at least the following relations, with thicker edges indicating stronger margins:\footnote{Equal edge thickness does not imply equal margins.}

\begin{tikzpicture}

\node  at (3,0) (b) {$b$}; 
\node  at (0,0) (a) {$a$}; 
\node at (-3,0) (c) {$a_1$}; 
\node at (-4.5,1) (d) {$a_2$}; 

\path[->,draw,line width=1.8pt] (c) to node {} (d);
\path[->,draw,thick] (b) to node  {} (a);
\path[->,draw,very thick] (a) to node  {} (c);
\path[->,draw,line width=1.8pt] (d) to node  {} (a);

\end{tikzpicture}

\noindent Now since $a$ wins in $\mathcal{M}_{-b}$ and hence is SC-undefeated by the inductive hypothesis, the edge from $a_2$ to $a$ must be the weakest link in some cycle in $\mathcal{M}_{-b}$. Since the edge from $a_2$ to $a$ is not weakest in the cycle involving $a$, $a_1$, and $a_2$, it follows that there must be an edge from $a$ to $a_3$ that is stronger than the edge from $a_2$ to $a$ (here dotted lines indicate no claim about relative strength of margins):
\begin{center}
\begin{tikzpicture}

\node  at (0,0) (a) {$a$}; 
\node at (-3,0) (c) {$a_1$}; 
\node at (-4.5,1) (d) {$a_2$}; 
\node at (-1.5,1) (e) {$a_3$}; 

\path[->,draw,line width=1.8pt] (c) to node {} (d);
\path[->,draw,dotted, thick] (d) to node {} (e);
\path[->,draw,dotted,thick] (e) to node {} (c);
\path[->,draw,very thick] (a) to node  {} (c);
\path[->,draw,line width=2.1pt] (a) to node  {} (e);
\path[->,draw,line width=1.8pt] (d) to node {} (a);

\end{tikzpicture}
\end{center}
Since $a_3$ wins in $\mathcal{M}_{-a_2}$ and hence is SC-undefeated by the inductive hypothesis, the edge from $a$ to $a_3$ must be the weakest in some cycle in $\mathcal{M}_{-a_2}$. But by inspection of the initial graph in Case 3, such a cycle must contain the edge from $b$ to $a$, which is weaker than the edge from $a$ to $a_1$ by (\ref{StrengthFact}), which is weaker than the edge from $a_2$ to $a$, which is weaker than the edge from $a$ to $a_3$---a contradiction.\end{proof}

\noindent Our proof does not generalize. It finds cycles $C_1,...,C_n$ with increasingly strong edges, ending with an impossible $C_n$. $C_1$ always exists, but $\mathcal{M}$ may have exponentially many cycles, and some $C_i$ may block induction by including none of the alternatives that win when either of two others is removed.

\section{SAT Encoding}\label{section: SAT encoding}

In this section, we prove Conjecture \ref{ConjInformal} for 6 alternatives and refute it for 7 with the help of SAT solving. First, we recall the necessary basic notions from propositional logic (see, e.g., \citealt[Ch.~1]{Enderton2001}). We begin with the formal language and semantics of propositional logic.

\begin{definition} Given any set $\{x_1,\dots,x_k\}$ of \textit{propositional variables}, the set $L(x_1,\dots,x_k)$ is the smallest set of strings over $\{x_1,\dots,x_k,(,),\neg,\wedge,\vee,\to\}$ such that:
\begin{enumerate}
\item each $x_i$ is in $L(x_1,\dots,x_k)$;
\item if $\varphi$ and $\psi$ are in $L(x_1,\dots,x_k)$, so are $\neg\varphi$,  $(\varphi\wedge\psi)$, $(\varphi\vee\psi)$, and $(\varphi\to\psi)$.
\end{enumerate}
We call elements of $L(x_1,\dots,x_k)$ the \textit{Boolean formulas} that can be constructed from  $x_1,\dots,x_k$. Examples include ${x_1\to\neg x_2}$, $x_3\wedge (x_2\vee x_4)$, etc.~(omitting outermost parentheses since no ambiguity will~arise).
\end{definition}
\begin{definition}
    A \textit{truth-value assignment}  $v$ for $\{x_1,\dots,x_k\}$ is a function mapping each variable $x_i$  to either \textsc{True} or \textsc{False}. Such an assignment extends recursively to an arbitrary formula $\varphi$ in $L(x_1,\dots,x_k)$  as follows:
    \begin{align*}
    \hat{v}(x_i) & = v(x_i) \\
            \hat{v}(\neg\varphi ) &= \textsc{True} \text{ iff } \hat{v}(\varphi) = \textsc{False};\\
        \hat{v}(\varphi \lor \psi) &= \textsc{True} \text{ iff }\hat{v}(\varphi) = \textsc{True} \text{ or } \hat{v}(\psi) = \textsc{True};\\
        \hat{v}(\varphi \land \psi) &= \textsc{True} \text{ iff }\hat{v}(\varphi) = \textsc{True} \text{ and } \hat{v}(\psi) = \textsc{True};\\
            \hat{v}(\varphi \to \psi) &= \textsc{True} \text{ iff }\hat{v}(\varphi) = \textsc{False} \text{ or } \hat{v}(\psi) = \textsc{True}.
    \end{align*}
\end{definition}

The Boolean satisfiability (SAT) problem  asks whether for a given Boolean formula $\varphi$, there exists a truth-value assignment $v$ for the variables in $\varphi$ such that $\hat{v}(\varphi)=\textsc{True}$. If so, we say $\varphi$ is \emph{satisfiable}; otherwise $\varphi$ is \emph{unsatisfiable}. 

SAT solvers are programs that accept a Boolean formula $\varphi$ written in Conjunctive Normal Form\footnote{I.e., a formula $(l_{1,1}\vee\dots\vee l_{1,n_1})\wedge \dots\wedge (l_{m,1}\vee\dots\vee l_{m,n_m})$ where each $l_{p,q}$ is of the form $x_i$ or $\neg x_i$.} (CNF) and either produce an assignment for its variables such that $\varphi$ evaluates to \textsc{True} or produce a proof that the formula is unsatisfiable. A formula that is not in CNF can be converted to an equisatisfiable CNF formula (that is, a formula that is satisfiable if and only if the original formula is satisfiable) with only a constant-factor increase in size by way of a Tseitin  transformation \cite{Tseitin}, which introduces \emph{auxiliary} variables that were not present in the original formula. 

In this section, we construct a formula for any $n$ that is satisfiable if and only if there exists a linear ordinal margin matrix for $n$ alternatives where Simple Stable Voting does not select a Split Cycle winner, contrary to Conjecture \ref{Conj}. We will write the formula as a conjunction of constraints in propositional logic, assuming a Tseitin transformation is performed to obtain a CNF formula for the SAT solver. 

\subsection{Linear Ordinal Margin Matrices}\label{section: ordinal margin matrices}

\newcommand{\gt}[4]{s_{(#1, #2), (#3, #4)}}

We begin by encoding the property of being a linear  ordinal margin matrix over $n$ alternatives. Without loss of generality we assume the alternative set $X$ is $[n]$, the set of numbers from $1$ to $n$. We begin by introducing variables $\gt{a}{b}{c}{d}$ for all $a, b, c, d \in [n]$ with $a\neq b$, $c\neq d$, and $(a,b)\neq (c,d)$, representing $(a, b) \succ (c, d)$ in the  ordinal margin matrix. For all such variables, we add the following constraints to our formula (where $e\neq f$, $(e,f)\neq (a,b)$, and $(e,f)\neq (c,d)$):
\begin{itemize}
\item Transitivity: $(\gt{a}{b}{c}{d} \land \gt{c}{d}{e}{f}) \to \gt{a}{b}{e}{f}$;
\item Asymmetry: $\gt{a}{b}{c}{d} \to \lnot \gt{c}{d}{a}{b}$;
\item Connectedness: $\lnot \gt{c}{d}{a}{b} \to \gt{a}{b}{c}{d}$; 
\item Ordinal skew-symmetry: $\gt{a}{b}{c}{d} \leftrightarrow \gt{d}{c}{b}{a}$.\footnote{Enforced by mapping $\gt{a}{b}{c}{d},\gt{d}{c}{b}{a}$ to one variable.}
\end{itemize}

\begin{fact}\label{ValToMat} Given an assignment $v$, define $\mathcal{M}_v=([n],\succ)$ by $(a,b)\succ (c,d)$ iff (i) $s_{(a,b),(c,d)}$ exists and $v(s_{(a,b),(c,d)})=\textsc{True}$ or (ii) $a=b$, $c\neq d$, and $v(s_{(d,c),(c,d)})=\textsc{True}$ or (iii) $a\neq b$, $c=d$, and $v(s_{(a,b),(b,a)})=\textsc{True}$. If $v$ satisfies the constraints above, then $\mathcal{M}_v$ is an ordinal margin matrix.
\end{fact}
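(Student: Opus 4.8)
The plan is to strip the diagonal pairs out of $\mathcal{M}_v$, show that on the off‑diagonal pairs the constraints force a strict linear order $\succ_0$ compatible with the skew‑symmetric involution $\iota(a,b)=(b,a)$, and then realize the full relation $\succ$ of $\mathcal{M}_v$ as the pullback of $(\mathbb{Z},>)$ along a rank function $g$ that sends every diagonal pair to $0$ and satisfies $g(\iota(x))=-g(x)$. Asymmetry and negative transitivity of $\succ$, and ordinal skew‑symmetry, then follow mechanically from the corresponding properties of $>$ on $\mathbb{Z}$. (This in fact shows $\mathcal{M}_v$ is \emph{linear}, which is all the SAT encoding ever needs.)

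First I would record what the four constraint families buy us on the off‑diagonal pairs $\{(a,b)\in[n]^2 : a\neq b\}$. Writing $x\succ_0 y$ for ``$v(s_{x,y})=\textsc{True}$'': asymmetry forbids $x\succ_0 y\succ_0 x$ (so $\succ_0$ is in particular irreflexive, since no variable $s_{x,x}$ exists); connectedness gives $x\succ_0 y$ or $y\succ_0 x$ for every two distinct off‑diagonal pairs; and the transitivity constraint — the degenerate instances with a repeated pair being vacuous or ruled out by asymmetry — makes $\succ_0$ a strict linear order on the $n(n-1)$ off‑diagonal pairs. The skew‑symmetry constraint says exactly that the fixed‑point‑free involution $\iota(a,b)=(b,a)$ is order‑reversing for $\succ_0$. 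Enumerating the off‑diagonal pairs as $p_1\succ_0 p_2\succ_0\dots\succ_0 p_{2N}$ with $N=n(n-1)/2$, the uniqueness of an order‑reversing involution on a finite chain forces $\iota(p_i)=p_{2N+1-i}$; hence $p_i\succ_0\iota(p_i)$ holds precisely when $i\le N$, and I will call such $p_i$ \emph{high} and the rest \emph{low}.

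Next I would set $g(p_i)=N+1-i$ for $i\le N$, $g(p_i)=N-i$ for $i>N$, and $g(a,a)=0$ for every $a$. This $g$ is strictly decreasing along $p_1,\dots,p_{2N}$, avoids the value $0$ off the diagonal, and satisfies $g(\iota(x))=-g(x)$ for all $x\in[n]^2$. The remaining work is to check, following the three‑way case split (i)--(iii) in the definition of $\mathcal{M}_v$, that $x\succ y$ in $\mathcal{M}_v$ iff $g(x)>g(y)$: for two off‑diagonal pairs this is the enumeration of $\succ_0$; case (ii) unwinds to ``$(d,c)\succ_0(c,d)$'', i.e.\ $(c,d)$ is low, i.e.\ $g(c,d)<0=g(a,a)$; case (iii) unwinds to ``$(a,b)\succ_0(b,a)$'', i.e.\ $(a,b)$ is high, i.e.\ $g(a,b)>0=g(c,c)$; and no diagonal pair is $\succ$ another, matching $g\equiv 0$ on the diagonal. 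Given this characterization, $\succ$ is the pullback of $(\mathbb{Z},>)$, hence asymmetric and negatively transitive, and ordinal skew‑symmetry is the chain $(a,b)\succ(c,d)\iff g(a,b)>g(c,d)\iff -g(a,b)<-g(c,d)\iff g(d,c)>g(b,a)\iff(d,c)\succ(b,a)$.

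I do not anticipate a genuine obstacle here; the step requiring the most care is the bookkeeping around degenerate instances of the constraints — in particular, verifying that the identification enforcing skew‑symmetry (mapping $\gt{a}{b}{c}{d}$ and $\gt{d}{c}{b}{a}$ to one variable) never identifies a variable with its own negation, which is automatic from $a\neq b$ and $c\neq d$, and that the degenerate transitivity chains are really excluded by asymmetry. All the conceptual content lies in the single observation that ordinal skew‑symmetry pins the diagonal pairs to the exact midpoint of the otherwise‑linear order $\succ_0$.
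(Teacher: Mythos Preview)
The paper states this fact without proof, presumably regarding a direct case split (diagonal vs.\ off-diagonal pairs) verifying asymmetry, negative transitivity, and ordinal skew-symmetry as routine. Your argument is correct and takes a cleaner route than such a direct verification: instead of checking the three axioms across all combinations of diagonal and off-diagonal pairs, you exhibit $\succ$ as the pullback of $(\mathbb{Z},>)$ along a rank function $g$ with $g\circ\iota=-g$, after which all three properties drop out at once. This also buys you the stronger conclusion that $\mathcal{M}_v$ is \emph{linear}, which the paper does not assert in this fact but does rely on downstream (Lemma~\ref{UniqueSSV}, Fact~\ref{ValToMat2}, Proposition~\ref{EncodingProp}). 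Your bookkeeping around the degenerate constraint instances is sound; in particular, the transitivity constraints are only instantiated for three pairwise-distinct off-diagonal pairs, and the missing instances with a repeated pair are indeed either vacuous or contradicted by asymmetry, exactly as you note.
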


\subsection{Simple Stable Voting}
The SSV winner in a linear  ordinal margin matrix $\mathcal{M}$ on the set $[n]$ is defined in terms of the SSV winners in restrictions of $\mathcal{M}$ to nonempty sets $V \subseteq [n]$. To encode Definition \ref{def:SSV}, we introduce variables $SSV_{V, a}$ for each $V \subseteq [n]$ and $a \in V$ representing that $a$ is the SSV winner in the restriction of $\mathcal{M}$ to $V$. We add two constraints for each nonempty $V \subseteq [n]$:
\begin{itemize}
\item at least one winner: $\underset{a \in V}{\bigvee} SSV_{V, a}$;
\item at most one winner: $\underset{\substack{a,b \in V\\a\ne b}}{\bigwedge} (\lnot SSV_{V, a} \lor \lnot SSV_{V, b})$.
\end{itemize}
Note that the ``at least one'' condition subsumes the base case of Definition \ref{def:SSV}. For the second part of the definition, we add the following constraint for all nonempty $V \subseteq [n]$ and  ${a, b \in V}$ such that $a \ne b$:

\begin{equation}
\scriptsize
(SSV_{V\setminus\{b\}, a} \land \bigwedge_{\substack{c, d \in V\\c \ne d\\ (a, b) \ne (c, d)}}(SSV_{V\setminus\{d\}, c} \to \gt{a}{b}{c}{d} )) \to SSV_{V, a}.\label{SSVconstraint}
\end{equation}

\noindent Thus, if $SSV_{V\setminus\{b\}, a}$ and $(a, b) \succ (c, d)$ for any $c \ne d$ satisfying $SSV_{V\setminus \{d\}, c}$, then we must have $SSV_{V, a}$. Because we have ensured that there is at most one SSV winner in $V$, this will be the only winner, completing our encoding of SSV. 

\begin{fact}\label{ValToMat2} If an assignment $v$ satisfies all of the constraints above, then $v(SSV_{V,a})=\textsc{True}$ if and only if $a$ is the SSV winner in the restriction of $\mathcal{M}_v$ to $V$.\end{fact}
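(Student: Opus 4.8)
The plan is to prove the biconditional by induction on $|V|$, matching the recursion in Definition~\ref{def:SSV} against the recursion encoded by constraint~(\ref{SSVconstraint}). Before starting the induction I would record one preliminary point: since $v$ satisfies the constraints of Section~\ref{section: ordinal margin matrices}, the structure $\mathcal{M}_v$ of Fact~\ref{ValToMat} is not merely an ordinal margin matrix but a \emph{linear} one, because the connectedness (and asymmetry) constraints force $\succ$ to be total on pairs of distinct alternatives. Consequently, for every nonempty $V\subseteq[n]$ the restriction $\mathcal{M}_v|_V$ is a linear ordinal margin matrix and so, by Lemma~\ref{UniqueSSV}, has a unique SSV winner, which I will denote $w_V$. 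The target is then exactly: $v(SSV_{V,a})=\textsc{True}$ if and only if $a=w_V$.

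The base case is $|V|=1$, say $V=\{a\}$: the ``at least one winner'' constraint forces $v(SSV_{\{a\},a})=\textsc{True}$, and $a=w_{\{a\}}$ by part~1 of Definition~\ref{def:SSV}, so the biconditional holds. For the inductive step, assume the statement for every proper subset of $V$ (each of size $<|V|$). By Lemma~\ref{UniqueSSV} together with linearity of $\succ$, there is a unique $\succ$-maximal pair $(w,b^*)$ of distinct alternatives of $V$ such that $w$ is the SSV winner in $\mathcal{M}_v|_{V\setminus\{b^*\}}$, and by Definition~\ref{def:SSV} this $w$ equals $w_V$. The crux is to instantiate constraint~(\ref{SSVconstraint}) at $(V,w,b^*)$ and show its antecedent is satisfied by $v$. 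The first conjunct, $SSV_{V\setminus\{b^*\},w}$, is true under $v$ by the inductive hypothesis, since $w$ is the SSV winner in $\mathcal{M}_v|_{V\setminus\{b^*\}}$. For the big conjunction, fix $c\ne d$ in $V$ with $(c,d)\ne(w,b^*)$ and suppose $v(SSV_{V\setminus\{d\},c})=\textsc{True}$; by the inductive hypothesis $c$ is the SSV winner in $\mathcal{M}_v|_{V\setminus\{d\}}$, so $(c,d)$ belongs to the family of pairs over which the $\succ$-maximum defining $w_V$ is taken, and since $\succ$ is linear and $(c,d)\ne(w,b^*)$ we get $(w,b^*)\succ(c,d)$, i.e.\ $v(\gt{w}{b^*}{c}{d})=\textsc{True}$ by clause~(i) of Fact~\ref{ValToMat}. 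Hence the antecedent of~(\ref{SSVconstraint}) holds, so $v(SSV_{V,w})=\textsc{True}$. Finally the ``at most one winner'' constraint gives that $w$ is the unique alternative of $V$ whose $SSV_{V,\cdot}$ variable is true under $v$, which yields both directions of the biconditional at once: $v(SSV_{V,a})=\textsc{True}\iff a=w=w_V$.

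I do not expect a genuine obstacle here; the argument is a direct translation between the two recursions, with the ``at least one'' constraint handling the base case and the ``at most one'' constraint supplying uniqueness. The only points requiring care are bookkeeping: verifying that every variable invoked actually exists (for each relevant $c,d$ one has $w\ne b^*$, $c\ne d$, and $(w,b^*)\ne(c,d)$, so $\gt{w}{b^*}{c}{d}$ is among the introduced variables, and by Fact~\ref{ValToMat} its value tracks $(w,b^*)\succ(c,d)$), and using \emph{linearity} of $\succ$—not merely that it is a strict weak order—at the step where we must conclude $(w,b^*)\succ(c,d)$ rather than only $(c,d)\not\succ(w,b^*)$.
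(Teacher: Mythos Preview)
Your proposal is correct and follows essentially the same approach as the paper: induct on $|V|$, use the inductive hypothesis to verify the antecedent of constraint~(\ref{SSVconstraint}) at the witnessing pair for the SSV winner, and then invoke the ``at most one winner'' clause for uniqueness. Your explicit observation that the connectedness and asymmetry constraints make $\mathcal{M}_v$ \emph{linear} (so that Lemma~\ref{UniqueSSV} applies) is a point the paper leaves implicit but is indeed needed.
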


\begin{proof} By induction on $|V|$. For $V=\{a\}$,  $v(SSV_{V,a})=\textsc{True}$ by the ``at least one'' constraint. Suppose $|V|>1$. Let $x$ be the unique SSV winner, given by Lemma \ref{UniqueSSV}, in the restriction of $\mathcal{M}_v$ to $V$, which we will call $\mathcal{N}$. Since $x$ is the SSV winner in $\mathcal{N}$, there is a $y\in V$ such that $x$ is the SSV winner in $\mathcal{N}_{-y}$, and $(x,y)\succ (s,t)$ for any  $(s,t)\neq (x,y)$ such that $s$ is the SSV winner in $\mathcal{N}_{-t}$. Then by the inductive hypothesis, $v(SSV_{V\setminus\{y\}, x})=\textsc{True}$, and for any  $(s,t)\neq (x,y)$ such that $v(SSV_{V\setminus\{t\}, s})=\textsc{True}$, we have $(x,y)\succ (s,t)$ and hence $v(s_{(x,y),(s,t)})=\textsc{True}$. Since $v$ satisfies (\ref{SSVconstraint}), it follows that $v(SSV_{V,x})=\textsc{True}$. Then by the ``at most one'' constraint, $v(SSV_{V,a})=\textsc{True}$ iff $a=x$.\end{proof}

\subsection{Split Cycle}

An alternative is an SC winner iff it is not SC-defeated. In a linear ordinal margin matrix $\mathcal{M}$, an alternative $a$ SC-defeats $b$ iff $a$ has a positive margin over $b$ and $a$ is not reachable from $b$ in $G(\mathcal{M})$ using only edges stronger than $(a, b)$.

For each $(a, b)$ with $a\neq b$ and each $c$, we create a variable $r_{(a, b), c}$ and impose the following constraints for all $a, b, c, d \in [n]$ such that $a \ne b$, $c \ne d$, and $(c, d) \ne (a, b)$:
\begin{align}
     &r_{(a,b),b}; \notag \\
     &(r_{(a, b), c} \wedge s_{(c,d),(d,c)}\land s_{(c, d), (a,b)}) \rightarrow r_{(a,b),d}.\label{ReachImp}
\end{align}
\begin{fact}\label{ReachFact} Let $v$ be an assignment satisfying all the constraints above. If $d$ is reachable from $b$ in $G(\mathcal{M}_v)$ using only edges stronger than $(a,b)$ in $\mathcal{M}_v$, then $v(r_{(a,b),d})=\textsc{True}$.\footnote{We do not need an `if and only if' here for our purposes.}
\end{fact}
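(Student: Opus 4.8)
The plan is a straightforward induction on the length $\ell$ of a path witnessing the reachability of $d$ from $b$ in $G(\mathcal{M}_v)$ through edges that are all stronger than $(a,b)$ in $\mathcal{M}_v$. In the base case $\ell = 0$ we have $d = b$, and the clause $r_{(a,b),b}$ — which $v$ satisfies by hypothesis — gives $v(r_{(a,b),b}) = \textsc{True}$ immediately.

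For the inductive step, write the path as $b = c_0, c_1, \dots, c_{\ell}, c_{\ell+1} = d$ and set $c = c_\ell$. The prefix $c_0,\dots,c_\ell$ is a path of length $\ell$ from $b$ to $c$ using only edges stronger than $(a,b)$, so the inductive hypothesis yields $v(r_{(a,b),c}) = \textsc{True}$. The final edge $c \to d$ of $G(\mathcal{M}_v)$ means $c \ne d$ and $(c,d) \succ (d,c)$ in $\mathcal{M}_v$, and its being stronger than $(a,b)$ means $(c,d) \succ (a,b)$, whence $(c,d) \ne (a,b)$ by asymmetry of $\succ$. Hence the variables $s_{(c,d),(d,c)}$ and $s_{(c,d),(a,b)}$ both exist; and since each of the pairs $(c,d)$, $(d,c)$, $(a,b)$ has distinct coordinates, only clause (i) in the definition of $\mathcal{M}_v$ (Fact~\ref{ValToMat}) can account for $(c,d) \succ (d,c)$ and $(c,d) \succ (a,b)$, so $v(s_{(c,d),(d,c)}) = \textsc{True}$ and $v(s_{(c,d),(a,b)}) = \textsc{True}$. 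Instantiating constraint~(\ref{ReachImp}) with this choice of $a,b,c,d$ — whose side conditions $a \ne b$, $c \ne d$, $(c,d) \ne (a,b)$ now all hold — we get that $v$ satisfies $(r_{(a,b),c} \wedge s_{(c,d),(d,c)} \wedge s_{(c,d),(a,b)}) \to r_{(a,b),d}$; since $v$ makes the antecedent true, $v(r_{(a,b),d}) = \textsc{True}$, completing the induction.

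I do not anticipate a genuine obstacle here; the only point needing care is checking that every $s$-variable invoked in the argument actually exists, and this is exactly where the hypothesis ``using only edges stronger than $(a,b)$'' does its work. It rules out the degenerate edge $(a,b)$ itself (since $\succ$ is irreflexive) and, trivially, self-loops, so each edge $(c,d)$ traversed along the path satisfies $c \ne d$ and $(c,d) \ne (a,b)$, keeping all the relevant transitivity/reachability clauses in scope. (Note that an ``if and only if'' is not claimed and would require more: $r_{(a,b),c}$ can be true under $v$ for reasons unrelated to any actual path, since the constraints only force $r$-variables to be true, never false.)
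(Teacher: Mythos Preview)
Your proof is correct and follows essentially the same approach as the paper's: induction on the length of the witnessing path, with the base case handled by the unit clause $r_{(a,b),b}$ and the inductive step by constraint~(\ref{ReachImp}). You are in fact slightly more careful than the paper in explicitly verifying the side conditions $c\neq d$ and $(c,d)\neq(a,b)$ that guarantee the relevant $s$-variables exist and that (\ref{ReachImp}) is instantiable, and your closing remark on why only one direction is needed is apt.
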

\begin{proof} We prove by induction on $k\in\mathbb{N}$ that if $d$ is reachable from $b$ using a path of length $k$ whose edges are stronger than $(a,b)$ in $\mathcal{M}_v$, then $v(r_{(a,b),d})=\textsc{True}$. For $k=0$, one of our constraints is that $v(r_{(a,b),b})=\textsc{True}$. Now assuming the claim holds for paths of length $k$, suppose that $d$ is reachable from $b$ using a path of length $k+1$ whose edges are stronger than $(a,b)$ in $\mathcal{M}_v$. Call the last edge in the path $(c,d)$. Then $c$ is reachable from $b$ using a path of length $k$ whose edges are stronger than $(a,b)$, so by the inductive hypothesis, $v(r_{(a,b),c})=\textsc{True}$, and by definition of $\mathcal{M}_v$, $(c,d)\succ (d,c)$ and $(c,d)\succ (a,b)$ together imply $\hat{v}(s_{(c,d),(d,c)}\wedge s_{(c,d),(a,b)})=\textsc{True}$. Then since $v$ satisfies (\ref{ReachImp}),  $v(r_{(a,b),d})=\textsc{True}$, which completes the proof.\end{proof}
Next, for each $b\in [n]$, we introduce a variable $SC_b$ subject to the following constraint: 
\begin{equation}\big(\bigwedge_{\substack{a \in [n] \\ a \ne b}} (s_{(a,b),(b,a)}\to r_{(a, b), a})\big) \rightarrow SC_{b}.\label{SCeq}\end{equation}

\noindent The following is straightforward to check.

\begin{fact}\label{MatToVal} Given an ordinal margin matrix $\mathcal{M}=(X,\succ)$, let $v_\mathcal{M}$ be an assignment such that for all $s_{(a,b),(c,d)}$:
\begin{enumerate}
\item $v_\mathcal{M}(s_{(a,b),(c,d)})=\textsc{True}$ iff $(a,b)\succ (c,d)$;
\item $v_\mathcal{M}(SSV_{V,a})=\textsc{True}$ iff $a$ is the SSV winner in the restriction of $\mathcal{M}$ to $V$;
\item $v_\mathcal{M}(r_{(a,b),c})=\textsc{True}$ iff $b=c$ or $c$ can be reached from $b$ in the majority graph $G(\mathcal{M})$ using only edges stronger than $(a,b)$ according to $\succ$;
\item $v_\mathcal{M}(SC_a)=\textsc{True}$ iff $a$ is an SC winner in $\mathcal{M}$.
\end{enumerate}
Then all of the constraints above are \textsc{True} according to $\hat{v}_\mathcal{M}$.
\end{fact}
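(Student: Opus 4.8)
The plan is to check, one family of constraints at a time, that $\hat v_{\mathcal M}$ makes each constraint \textsc{True}, in almost every case by simply unwinding the corresponding item (1)--(4) of the definition of $v_{\mathcal M}$. A preliminary observation: since $s_{(a,b),(c,d)}$ and $s_{(d,c),(b,a)}$ are literally the same variable, item (1) defines a function on the $s$-variables only because $\mathcal M$ satisfies ordinal skew-symmetry, i.e.\ $(a,b)\succ(c,d)$ iff $(d,c)\succ(b,a)$; this is also exactly what makes the ordinal skew-symmetry constraint hold. For transitivity: if both antecedents hold under $\hat v_{\mathcal M}$ then $(a,b)\succ(c,d)$ and $(c,d)\succ(e,f)$, and a strict weak order is transitive, so $(a,b)\succ(e,f)$; asymmetry of $\succ$ yields the asymmetry constraint; and connectedness holds because $\succ$ restricted to the off-diagonal pairs of $X^2$ is a linear order (here, as throughout this section, $\mathcal M$ is taken to be linear --- already implicit since items (2) and (4) refer to SSV winners, defined only for linear matrices), so of any two distinct off-diagonal pairs one is $\succ$ the other.

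Next, the SSV constraints. By Lemma~\ref{UniqueSSV}, the restriction of $\mathcal M$ to any nonempty $V\subseteq[n]$ (again a linear ordinal margin matrix) has a unique SSV winner, so by item (2) exactly one variable $SSV_{V,a}$ with $a\in V$ is set \textsc{True}, which gives both the ``at least one'' and ``at most one'' constraints. For constraint~(\ref{SSVconstraint}), suppose its antecedent holds under $\hat v_{\mathcal M}$ for some nonempty $V$ and $a,b\in V$ with $a\ne b$. By items (1)--(2) this says that $a$ is the SSV winner in the restriction of $\mathcal M$ to $V\setminus\{b\}$ and that $(a,b)\succ(c,d)$ for every $(c,d)\ne(a,b)$ with $c\ne d$ such that $c$ is the SSV winner in the restriction of $\mathcal M$ to $V\setminus\{d\}$; hence $(a,b)$ is the $\succ$-maximal pair whose first coordinate is the SSV winner after deleting its second coordinate, so by Definition~\ref{def:SSV} $a$ is the SSV winner in the restriction of $\mathcal M$ to $V$, i.e.\ $v_{\mathcal M}(SSV_{V,a})=\textsc{True}$.

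Finally, the Split Cycle constraints. The constraint $r_{(a,b),b}$ holds by item (3) (taking the reached vertex to be $b$ itself), and for~(\ref{ReachImp}): if its antecedent holds under $\hat v_{\mathcal M}$ then $c$ is reachable from $b$ (or equals $b$) along edges stronger than $(a,b)$, while $(c,d)\succ(d,c)$ and $(c,d)\succ(a,b)$ witness an edge $c\to d$ in $G(\mathcal M)$ stronger than $(a,b)$; prepending the path from $b$ to $c$ shows $d$ is reachable from $b$ along such edges, so $v_{\mathcal M}(r_{(a,b),d})=\textsc{True}$. For~(\ref{SCeq}), assume its antecedent holds under $\hat v_{\mathcal M}$ for some $b$ and suppose, for contradiction, that some $a\ne b$ SC-defeats $b$. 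Then $(a,b)\succ(b,a)$, so the antecedent forces $v_{\mathcal M}(r_{(a,b),a})=\textsc{True}$; since $a\ne b$, this means $a$ is reachable from $b$ along edges all strictly stronger than $(a,b)$, and closing that path with the edge $a\to b$ produces a cycle in $G(\mathcal M)$ in which $(a,b)$ is the unique weakest edge, contradicting clause~\ref{SCdef2} of Definition~\ref{SCdef}. Hence no $a$ SC-defeats $b$, so $b$ is an SC winner and $v_{\mathcal M}(SC_b)=\textsc{True}$.

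I expect no real difficulty: every step above is a direct translation of a definition, so the ``straightforward to check'' label is warranted. The one spot that calls for an actual (if very small) argument --- and hence the only obstacle worth naming --- is the step for~(\ref{SCeq}), where one must pass from the reachability-based encoding of Split Cycle back to the ``weakest edge in a cycle'' formulation of Definition~\ref{SCdef}; this is exactly the converse of the reasoning already used in the proof of Fact~\ref{ReachFact} and in the proof that an SC winner always exists.
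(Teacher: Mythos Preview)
Your proposal is correct and follows exactly the approach the paper intends: the paper offers no proof beyond ``straightforward to check,'' and the natural (indeed only) route is to verify each family of constraints by unwinding items (1)--(4), which you do accurately. Your observation that linearity of $\mathcal{M}$ is implicitly required---both for the connectedness constraint and for SSV to be defined---is a genuine clarification the paper glosses over.
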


We can now prove the main fact about our SAT encoding.

\begin{proposition}\label{EncodingProp} Where $\varphi$ is the conjunction of all constraints introduced above, there exists a linear ordinal margin matrix in which the SSV winner is not an SC winner iff ${SSV_{[n],1}\wedge\neg SC_1 \wedge\varphi}$ is satisfiable.
\end{proposition}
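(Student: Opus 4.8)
The plan is to prove the two directions of the biconditional separately, using the "bridging" facts already established: Fact~\ref{ValToMat} (a satisfying assignment of the ordinal-margin-matrix constraints yields an ordinal margin matrix $\mathcal{M}_v$), Facts~\ref{ValToMat2} and~\ref{ReachFact} (a satisfying assignment of the SSV and reachability constraints correctly tracks SSV winners and reachability in $\mathcal{M}_v$), and Fact~\ref{MatToVal} (any ordinal margin matrix $\mathcal{M}$ induces an assignment $v_\mathcal{M}$ satisfying all the constraints, with the intended meaning of every variable). The only genuinely new content is the handling of the two added literals $SSV_{[n],1}$ and $\neg SC_1$, together with the observation that fixing alternative $1$ as the SSV winner and non-SC-winner is without loss of generality by relabeling alternatives, and a check that the constraints as written really do force the SC-winner variable on the correct side.

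For the right-to-left direction, suppose $SSV_{[n],1}\wedge\neg SC_1\wedge\varphi$ is satisfied by some assignment $v$. By Fact~\ref{ValToMat}, $\mathcal{M}_v$ is an ordinal margin matrix; it is linear because the Asymmetry, Connectedness, and Transitivity constraints make $\succ$ a linear order on the off-diagonal pairs (here I would note that the remaining comparisons involving diagonal pairs $(a,a)$ are determined by Fact~\ref{ValToMat}'s clauses (ii)--(iii), so linearity in the sense of Definition~\ref{OrdMarginMatrix} holds). Since $v(SSV_{[n],1})=\textsc{True}$, Fact~\ref{ValToMat2} gives that $1$ is the SSV winner in $\mathcal{M}_v$. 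It remains to see that $1$ is not an SC winner. Here is the one subtle point: the constraint (\ref{SCeq}) is only a one-way implication, so I must argue that $v(SC_1)=\textsc{False}$ implies $1$ is SC-defeated, not merely fail to conclude it is undefeated. Since the antecedent of (\ref{SCeq}) implies $SC_1$, and $v(SC_1)=\textsc{False}$, the antecedent must be false under $v$: there is some $a\neq 1$ with $v(s_{(a,1),(1,a)})=\textsc{True}$ but $v(r_{(a,1),a})=\textsc{False}$. By the contrapositive of Fact~\ref{ReachFact}, $a$ is \emph{not} reachable from $1$ in $G(\mathcal{M}_v)$ using only edges stronger than $(a,1)$; and $v(s_{(a,1),(1,a)})=\textsc{True}$ means $a$ has a positive margin over $1$. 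As noted in the text preceding the Split Cycle constraints, in a linear ordinal margin matrix these two conditions are exactly what it means for $a$ to SC-defeat $1$ (an edge $(a,1)$ fails to be weakest in a cycle through it precisely when $a$ is reachable from $1$ via strictly stronger edges). Hence $1$ is SC-defeated in $\mathcal{M}_v$, so $\mathcal{M}_v$ is the desired counterexample matrix.

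For the left-to-right direction, suppose some linear ordinal margin matrix $\mathcal{M}'=(X',\succ')$ has SSV winner $w$ that is not an SC winner. Relabel the alternatives by a bijection $X'\to[n]$ sending $w$ to $1$; since SSV, SC, and all the relevant notions are invariant under relabeling alternatives, we get a linear ordinal margin matrix $\mathcal{M}=([n],\succ)$ whose SSV winner is $1$ and in which $1$ is SC-defeated, say by $a$. Apply Fact~\ref{MatToVal} to $\mathcal{M}$ to obtain an assignment $v_\mathcal{M}$ satisfying $\varphi$, with every variable carrying its intended meaning. Then $v_\mathcal{M}(SSV_{[n],1})=\textsc{True}$ by clause~2 of Fact~\ref{MatToVal}, and $v_\mathcal{M}(SC_1)=\textsc{False}$ by clause~4, since $1$ is not an SC winner in $\mathcal{M}$. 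Therefore $v_\mathcal{M}$ satisfies $SSV_{[n],1}\wedge\neg SC_1\wedge\varphi$, as required.

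The main obstacle is not depth but bookkeeping: I expect the delicate step to be the right-to-left argument that the one-directional constraint (\ref{SCeq}) still suffices to certify that $1$ is SC-defeated, which forces me to unwind the definition of SC-defeat in terms of reachability via stronger edges and invoke Fact~\ref{ReachFact} contrapositively. A secondary point requiring care is confirming that the relabeling in the left-to-right direction is legitimate, i.e., that all of $\succ$, SSV, and SC are equivariant under permutations of alternatives, so that fixing the problematic alternative to be $1$ loses no generality; this is routine from the definitions but should be stated.
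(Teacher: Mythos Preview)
Your proposal is correct and follows essentially the same approach as the paper's proof: both directions use the bridging Facts~\ref{ValToMat}, \ref{ValToMat2}, \ref{ReachFact}, and \ref{MatToVal} in the same way, with the key step in the right-to-left direction being that $\neg SC_1$ forces the antecedent of (\ref{SCeq}) to fail, yielding a witness $a$ with $v(s_{(a,1),(1,a)})=\textsc{True}$ and $v(r_{(a,1),a})=\textsc{False}$, from which Fact~\ref{ReachFact} gives an SC-defeat. Your write-up is simply more explicit than the paper's terse version, spelling out the linearity check for $\mathcal{M}_v$ and the relabeling equivariance that the paper leaves implicit.
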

\begin{proof}
From left to right, suppose there is a linear ordinal margin matrix $\mathcal{M}$ in which the SSV winner, who we relabel as $1$, is not an SC winner. Then by Fact \ref{MatToVal}, $v_\mathcal{M}$ satisfies ${SSV_{[n],1}\wedge\neg SC_1 \wedge\varphi}$. From right to left, suppose there is an assignment $v$ satisfying $SSV_{[n],1}\wedge\neg SC_1 \wedge\varphi$. Then by Fact~\ref{ValToMat}, we can consider $\mathcal{M}_v$, in which $1$ is the SSV winner by Fact~\ref{ValToMat2}. Since $v$ satisfies $\neg SC_1$ and (\ref{SCeq}), there is some ${a\neq 1}$ with $v(s_{(a,1),(1,a)})=\textsc{True}$ and $v(r_{(a,1),a})=\textsc{False}$, which with Fact \ref{ReachFact} implies that $a$ SC-defeats $1$.\end{proof}

\subsection{Symmetry Breaking}
\label{subsec:symmetry}

With the constraints added thus far, a linear ordinal margin matrix over $n$ alternatives may be represented in $n!$ ways, one for each permutation of $[n]$. To simplify our encoding and make SAT solving more efficient, we \emph{symmetry break} on permutations of the alternatives: any given  ordinal margin matrix is represented uniquely up to isomorphism. We do so by asserting that for all ${i \in [n]}$ with $i\geq 2$, alternative 1 is the SSV winner in $[i]$ as witnessed by the pair~$(1, i)$:
\[ SSV_{[i-1], 1} \land \bigwedge_{\substack{c, d \in [i]\\c \ne d\\ (1, i) \ne (c, d)}}\left(SSV_{[i]\setminus\{d\}, c} \to \gt{1}{i}{c}{d} \right).\]
\noindent Adding these constraints to $\varphi$ in Proposition \ref{EncodingProp}, an analogous proposition holds. If $a \in X$ is the SSV winner for a linear ordinal margin matrix $\mathcal{M}$ over $n$ alternatives, there is a unique pair $(a, b_n)$ that is {$\succ$-maximal} such that $a$ is the SSV winner in $\mathcal{M}_{-b_n}$. This is witnessed by a unique pair $(a, b_{n-1})$ that is maximal and such that $a$ is the SSV winner in $(\mathcal{M}_{-b_n})_{-b_{n-1}}$. This continues until we reach $(a, b_2)$. The bijection that sends $a$ to $1$ and $b_i$ to $i$ is therefore a canonical relabeling.  We note that symmetry breaking does not make exhaustively searching among linear ordinal margin matrices tractable; restricting attention to matrices which are not isomorphic to each other, there are still 21,259,022,134,381,903,872,000 matrices for $n = 7$.

\subsection{Results}\label{subsec:results}

For the encoding described above, we apply the Tseitin transformation and run CaDiCal \cite{CaDiCal} release version 1.0.3. The tests were performed on a MacBook Air with an M4 processor and 24 GB of RAM. 

Using the reasoning from the proof of Theorem \ref{prop:proof}, if 1 is SC-defeated, it must be by $n$. Otherwise, there exists a counterexample to the conjecture of smaller size. To search for a minimal counterexample, we add the constraint $SC_{n} \land s_{(n,1),(1,n)}\wedge \lnot r_{(n, 1), n}$ to the formula in Proposition \ref{EncodingProp}.

At $n = 5$, we found a proof of unsatisfiability in 0.01 seconds, verifying Theorem~\ref{prop:proof} given in Section~\ref{5cands}. At $n = 6$, we found a proof of unsatisfiability in $0.5$ seconds, verifying Yifeng Ding's exhaustive search as discussed in Section~\ref{sec:intro}. Thus, we can strengthen Theorem~\ref{prop:proof} as follows.

\begin{theorem}\label{SATstronger} For any linear ordinal margin matrix with at most 6 alternatives, the SSV winner is a Split Cycle winner.
\end{theorem}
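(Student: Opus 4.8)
The plan is to reduce the claim for $n\le 6$ to the satisfiability of the propositional formula constructed in Section~\ref{section: SAT encoding}, and then to certify unsatisfiability of that formula by a SAT solver. First I would invoke Proposition~\ref{EncodingProp}: there exists a linear ordinal margin matrix with $n$ alternatives in which the SSV winner is not an SC winner if and only if $SSV_{[n],1}\wedge\neg SC_1\wedge\varphi$ is satisfiable, where $\varphi$ is the conjunction of all the encoding constraints (linear ordinal margin matrix axioms, the SSV recursion, the reachability constraints, and the $SC_b$ constraints). Since Theorem~\ref{prop:proof} already gives the result for $n\le 5$ by traditional means, and the encoding is finite for each fixed $n$, it suffices to treat $n=6$.

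Next I would cut down the search space using the observation extracted from the proof of Theorem~\ref{prop:proof}: if $1$ is the SSV winner (witnessed by a match $1$ vs.\ some alternative) and $1$ is SC-defeated, then the defeater must be the very alternative deleted in the witnessing match, which under the symmetry-breaking convention of Section~\ref{subsec:symmetry} is alternative $n$; any other defeater would persist after that deletion and, by induction, yield a smaller counterexample, which is excluded once $n-1\le 5$ is handled. Accordingly I would add the constraint $SC_n\wedge s_{(n,1),(1,n)}\wedge\neg r_{(n,1),n}$, together with the symmetry-breaking clauses, so that the formula is satisfiable iff there is a \emph{minimal} counterexample at exactly $n$ alternatives. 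The correctness of this reduction rests on Facts~\ref{ValToMat}, \ref{ValToMat2}, \ref{ReachFact}, and \ref{MatToVal}, all established in the excerpt, plus Lemma~\ref{UniqueSSV} for uniqueness of the SSV winner.

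Then I would apply a Tseitin transformation to obtain an equisatisfiable CNF, feed it to CaDiCaL, and record the outcome. For $n=6$ the solver reports unsatisfiability (in about half a second on commodity hardware), which by the equivalence above means no linear ordinal margin matrix on $6$ alternatives has its SSV winner SC-defeated by alternative $6$; combined with the exclusion of other defeaters and with Theorem~\ref{prop:proof} for $\le 5$ alternatives, this gives the theorem. The same run at $n=5$ independently re-derives Theorem~\ref{prop:proof}, providing a machine check of the hand proof.

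The main obstacle is not the solving itself but the faithfulness of the encoding: one must be certain that $\varphi$ exactly captures ``linear ordinal margin matrix,'' that the $SSV_{V,a}$ variables are forced to their intended values (this is where the ``at least one''/``at most one'' clauses interact with the implication~(\ref{SSVconstraint}), and where the induction in the proof of Fact~\ref{ValToMat2} does the real work), and that the reachability variables $r_{(a,b),c}$ over-approximate reachability in exactly the direction needed for Fact~\ref{ReachFact} and hence for the ``right to left'' direction of Proposition~\ref{EncodingProp}. A secondary concern is trusting the unsatisfiability verdict; this can be discharged by having the solver emit a DRAT proof and checking it with an independent verifier, or simply by noting that the $n=6$ case also confirms Yifeng Ding's independent exhaustive search. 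Once these modeling points are granted, the theorem follows immediately from Proposition~\ref{EncodingProp} and the solver's output.
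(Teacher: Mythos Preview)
Your proposal is correct and follows essentially the same approach as the paper: reduce the question to satisfiability via Proposition~\ref{EncodingProp}, add the symmetry-breaking clauses of Section~\ref{subsec:symmetry} together with the minimality constraint $SC_n\wedge s_{(n,1),(1,n)}\wedge\neg r_{(n,1),n}$, run CaDiCaL on the resulting CNF for $n=6$, and combine the unsatisfiability verdict with Theorem~\ref{prop:proof} for $n\le 5$. The paper likewise reports the $\sim 0.5$-second unsat result at $n=6$, notes that it confirms Yifeng Ding's independent exhaustive search, and mentions the verified DRAT proof---precisely the trust mechanisms you flag in your discussion of obstacles.
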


However, for $n = 7$, we produced the counterexample to Conjecture \ref{Conj} in Figure~\ref{7candEx} in 73 seconds. The edges are labeled 1 through 21 in order of increasing strength according to $\succ$. Alternative $a$ is the SSV winner and SC-defeated by $b$. The SC winners are $b$, $c$, and $d$. These claims are verified in a notebook in the GitHub repository using the \texttt{pref\_voting} package \citep{HollidayPacuit2025}.

\begin{figure}[h]
\centering
\begin{tikzpicture}[node distance=3cm]
  \node[circle,draw,fill=medgreen!50,minimum width=20] (A) at (0:3) {$a$};
  \node[circle,draw,fill=cyan!50,minimum width=20] (B) at (51.43:3) {$b$};  
  \node[circle,draw,fill=cyan!50,minimum width=20] (C) at (102.86:3) {$c$};
  \node[circle,draw,fill=cyan!50,minimum width=20] (D) at (154.29:3) {$d$};
  \node[circle,draw,minimum width=20] (E) at (205.71:3) {$e$};
  \node[circle,draw,minimum width=20] (F) at (257.14:3) {$f$};
  \node[circle,draw,minimum width=20] (G) at (308.57:3) {$g$};

  \path[->,draw,thick] (D) to[pos=.75]  node[fill=white] {1} (G);
  \path[->,draw,thick] (G) to node[fill=white] {2} (A);
  \path[->,draw,thick] (D) to node[fill=white] {3} (E);
  \path[->,draw,thick] (F) to[pos=.85]  node[fill=white] {4} (D);  
  \path[->,draw,thick] (D) to node[fill=white] {5} (C);
  \path[->,draw,thick] (C) to[pos=.85]  node[fill=white] {6} (E);
  \path[->,draw,thick] (F) to[pos=.85]  node[fill=white] {7} (A);
  \path[->,draw,thick] (B) to[pos=.85]  node[fill=white] {8} (D);
  \path[->,draw,thick] (D) to[pos=.75]  node[fill=white] {9} (A);
  \path[->,draw,thick] (F) to[pos=.75]  node[fill=white] {10} (B);
  \path[->,draw,thick] (A) to[pos=.85]  node[fill=white] {11} (C);
  \path[->,draw,thick] (B) to node[fill=white] {12} (A);
  \path[->,draw,thick] (E) to[pos=.75]  node[fill=white] {13} (B);
  \path[->,draw,thick] (C) to node[fill=white] {14} (B);
  \path[->,draw,thick] (B) to[pos=.85]  node[fill=white] {15} (G);
  \path[->,draw,thick] (G) to[pos=.75]  node[fill=white] {16} (C);
  \path[->,draw,thick] (F) to node[fill=white] {17} (G);
  \path[->,draw,thick] (C) to[pos=.75]  node[fill=white] {18} (F);
  \path[->,draw,thick] (E) to node[fill=white] {19} (F);
  \path[->,draw,thick] (A) to[pos=.75] node[fill=white] {20} (E);  
  \path[->,draw,thick] (G) to[pos=.85]  node[fill=white] {21} (E);
\end{tikzpicture}
\caption{The 7-alternative counterexample to Conjecture \ref{ConjInformal}.}\label{7candEx}
\end{figure}
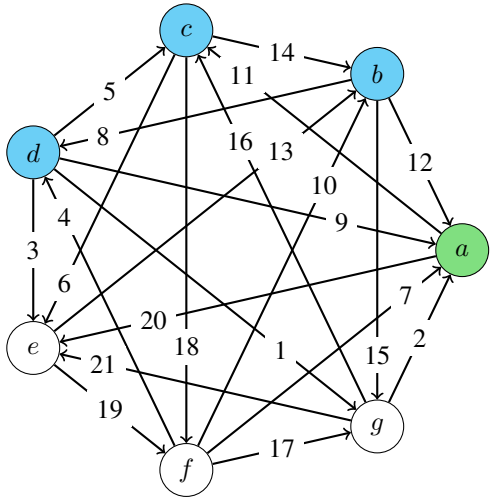

 Using the counterexample in Figure \ref{7candEx}, we show:

\begin{theorem}\label{7andup} For every $n\geq 7$, there is a linear ordinal margin matrix for $n$ alternatives in which the SSV winner is not an SC winner.
\end{theorem}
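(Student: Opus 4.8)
The plan is to bootstrap from the single $7$-alternative matrix of Figure~\ref{7candEx}, call it $\mathcal{M}_7=(O,\succ_O)$ with $O=\{a,\dots,g\}$: for each $n>7$ I would append $n-7$ ``dominated'' alternatives to obtain a linear ordinal margin matrix $\mathcal{M}_n$ on $n$ alternatives that is still a counterexample, while for $n=7$ we just use $\mathcal{M}_7$. Concretely, set $\mathcal{M}_n=(O\cup\{z_1,\dots,z_{n-7}\},\succ)$ where $\succ$ extends $\succ_O$; each $z_i$ has a negative margin against every member of $O$ and against every $z_j$ with $j<i$; and \emph{every} pair involving some $z_i$ is ranked by $\succ$ strictly below every positive $O$-to-$O$ pair and strictly above every negative $O$-to-$O$ pair, the mutual $\succ$-order among the $z$-involving pairs being any linear order compatible with ordinal skew-symmetry. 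The first, routine step is to check that $\mathcal{M}_n$ so described is a linear ordinal margin matrix.

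Next I would dispatch Split Cycle. From $z_i$ one can reach in $G(\mathcal{M}_n)$ only the alternatives $z_{i+1},z_{i+2},\dots$, so no $z_i$ lies on a directed cycle; hence every cycle of $G(\mathcal{M}_n)$ uses only vertices of $O$ and is, with identical edge-strength comparisons, a cycle of $G(\mathcal{M}_7)$. Since $(b,a)$ has positive margin in $O$ and no cycle of $G(\mathcal{M}_7)$ has $(b,a)$ weakest, the same holds in $\mathcal{M}_n$, so $b$ SC-defeats $a$ in $\mathcal{M}_n$ and $a$ is not an SC winner there.

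The substantive step is to show that $a$ is still the SSV winner of $\mathcal{M}_n$ (unique by Lemma~\ref{UniqueSSV}). I would prove, by induction on the number of alternatives, a lemma covering every submatrix obtainable from $\mathcal{M}_n$ by repeated deletion, i.e.\ every $\mathcal{N}=(O'\cup Z',\succ)$ with $O'\subseteq O$ and $Z'\subseteq Z$ inheriting the ``dominated-tail'' shape above: the SSV winner of $\mathcal{N}$ is the SSV winner of $\mathcal{N}|_{O'}$ when $O'\neq\emptyset$, and the source of the transitive tournament $\mathcal{N}|_{Z'}$ when $O'=\emptyset$. The induction step identifies the ``witnessing pairs'' $(u,v)$ — those with $u$ the SSV winner of $\mathcal{N}_{-v}$ — via the inductive hypothesis: deleting a member of $Z'$ leaves $O'$ and hence the core SSV winner $a^\star$ unchanged; deleting a non-final member of $O'$ mirrors a deletion inside $\mathcal{N}|_{O'}$; and deleting the last member of $O'$ exposes the transitive $Z'$-tournament, whose SSV winner is its source. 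One then reads off the $\succ$-maximal witnessing pair: if some $O'$-to-$O'$ witnessing pair has positive margin, then — being among the $\succ$-largest pairs of $\mathcal{N}$ — the $\succ$-maximal witnessing pair is an $O'$-to-$O'$ pair, and its first coordinate is $a^\star$ because the $O'$-to-$O'$ witnessing pairs of $\mathcal{N}$ are exactly the witnessing pairs of $\mathcal{N}|_{O'}$; otherwise every $O'$-to-$O'$ witnessing pair is negative, so the pairs $(a^\star,z)$ with $z\in Z'$ — which are witnessing pairs by the inductive hypothesis and lie above all negative $O$-to-$O$ pairs — are $\succ$-larger, and again the top witnessing pair has first coordinate $a^\star$. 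Either way the SSV winner of $\mathcal{N}$ is $a^\star$; applying this with $\mathcal{N}=\mathcal{M}_n$ yields that the SSV winner of $\mathcal{M}_n$ is the SSV winner $a$ of $\mathcal{M}_7$. The main obstacle is exactly this case analysis in the SSV induction: keeping track of which deletions preserve the core, and verifying that the ``squeezed'' placement of the $z$-involving pairs forces the correct witnessing pair to the top. The Split Cycle half and the check that $\mathcal{M}_n$ is well-formed are comparatively immediate.
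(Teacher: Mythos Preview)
Your argument is correct and your construction works, but it takes a genuinely different and more laborious route than the paper's. The paper also appends dominated alternatives $a_1,\dots,a_k$ to the $7$-alternative counterexample, but instead of squeezing \emph{all} new-alternative pairs between the positive and negative $O$-pairs, it simply places the specific pairs $(a,a_k),(a,a_{k-1}),\dots,(a,a_1)$ at the very \emph{top} of $\succ$ (and merely requires that each old alternative beat each new one, with the ordering among old pairs unchanged). The SSV argument then collapses to a one-line induction on $k$: the first pair in the descending list is $(a,a_k)$, and by the inductive hypothesis $a$ is the SSV winner after removing $a_k$, so $a$ wins outright---no lemma about arbitrary submatrices $(O'\cup Z',\succ)$ and no case split on the sign of the top $O'$-to-$O'$ witnessing pair is needed. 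Your approach instead proves the stronger invariance statement that a dominated tail in the squeezed position never changes the SSV winner on any restriction to $O'\subseteq O$; this buys generality the theorem does not need and forces the case analysis you describe. (One small omission in that analysis: when $|O'|=1$, deleting the sole member $a^\star$ of $O'$ yields a witnessing pair $(z^\star,a^\star)$ with first coordinate in $Z'$, which is neither an $O'$-to-$O'$ pair nor of the form $(a^\star,z)$; you should note that this pair has negative margin and is therefore $\succ$-below the positive pairs $(a^\star,z)$, so it cannot be $\succ$-maximal.) The Split Cycle half is immediate in both constructions for the same reason: the added alternatives lie on no majority cycle.
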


\begin{proof} Let $k=n-7$. Start with the 7-alternative example in Figure \ref{7candEx} and then expand to an $n$-alternative linear ordinal margin matrix with new alternatives $a_1,\dots,a_k$ such that (i) the descending order of pairs according to $\succ$ begins with $(a,a_k),(a,a_{k-1}),\dots, (a,a_1)$, and (ii) where $c$ is any old alternative, we have $(c,a_i)\succ (a_i,c)$, and the ordering of the pairs of old alternatives remains the same. We prove by induction on $k$ that $a$ is the SSV winner in the expanded ordinal margin matrix. The base case of $k=0$ is given by Figure~\ref{7candEx}. Now suppose the claim holds for $k-1$. To prove it for $k$, by (i) the first pair to consider in computing the SSV winner is $(a,a_k)$, and by the inductive hypothesis, $a$ is the SSV winner after the removal of $a_k$, so $a$ is the SSV winner with all of $a_1,\dots,a_k$ included. Yet $a$ is still SC-defeated with all of $a_1,\dots,a_k$ included, since by (ii) their inclusion does not create any new cycles involving old alternatives.\end{proof}

It is striking that Split Cycle chooses 3 winners for the counterexample in Figure \ref{7candEx}, given that it is estimated to produce only 1.08 winners on average with 7 alternatives in preference profiles sampled according to the Impartial Culture with many voters \cite{HP2023}. Thus, we decided to search for a counterexample with only a single Split Cycle winner. We add constraints $\lnot SC_{a} \lor \lnot SC_{b}$ for each pair of distinct alternatives $a, b \in [n]$. At size 7, this formula was determined unsatisfiable in 6 hours and 45 minutes. Going to size 8, we must remove the constraint $SC_{n} \land \neg r_{(n, 1), n}$, because smaller counterexamples exist, so the reasoning from the beginning of the proof of  Theorem~\ref{prop:proof} no longer applies. With this encoding, we find the counterexample in Figure \ref{8candEx}, where $a$ is the SSV winner and $g$ is the Split Cycle winner, in 24 minutes. In fact, $a$ is SC-defeated by $g$, so it is possible for SSV to choose a winner that is SC-defeated by the SV winner. This graph does not contain a subgraph that has distinct SV and SSV winners.

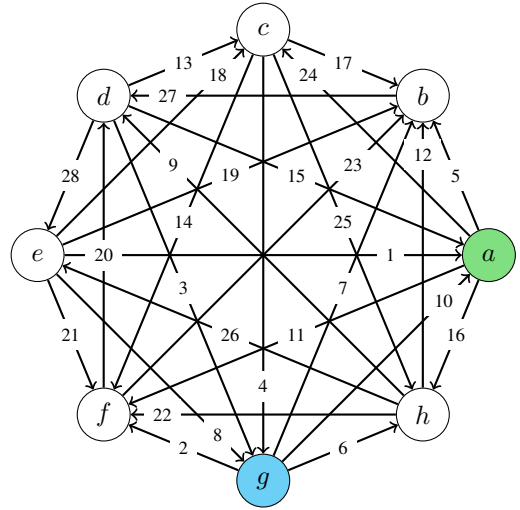
\begin{figure}[h]
\begin{center}
\begin{tikzpicture}[node distance=3cm]
  \node[circle,draw,minimum width=20,fill=medgreen!50] (0) at (0:3) {$a$};
  \node[circle,draw,minimum width=20] (1) at (45:3) {$b$};
  \node[circle,draw,minimum width=20] (2) at (90:3) {$c$};
  \node[circle,draw,minimum width=20] (3) at (135:3) {$d$};
  \node[circle,draw,minimum width=20] (4) at (180:3) {$e$};
  \node[circle,draw,minimum width=20] (5) at (225:3) {$f$};
  \node[circle,draw,minimum width=20,fill=cyan!50] (6) at (270:3) {$g$};
  \node[circle,draw,minimum width=20] (7) at (315:3) {$h$};
  
  \path[->,draw,thick] (0) to[pos=.86] node[fill=white,font=\scriptsize] {24} (2);
  \path[->,draw,thick] (0) to node[fill=white,font=\scriptsize] {16} (7);
  \path[->,draw,thick] (0) to node[fill=white,font=\scriptsize] {11} (5);
  \path[->,draw,thick] (0) to node[fill=white,font=\scriptsize] {5} (1);
  
  \path[->,draw,thick] (1) to[pos=.86] node[fill=white,font=\scriptsize] {27} (3);
  
  \path[->,draw,thick] (2) to node[fill=white,font=\scriptsize] {25} (7);
  \path[->,draw,thick] (2) to node[fill=white,font=\scriptsize] {17} (1);
  \path[->,draw,thick] (2) to node[fill=white,font=\scriptsize] {14} (5);
  \path[->,draw,thick] (2) to[pos=.83] node[fill=white,font=\scriptsize] {4} (6);
  
  \path[->,draw,thick] (3) to node[fill=white,font=\scriptsize] {28} (4);
  \path[->,draw,thick] (3) to node[fill=white,font=\scriptsize] {15} (0);
  \path[->,draw,thick] (3) to node[fill=white,font=\scriptsize] {13} (2);
  \path[->,draw,thick] (3) to node[fill=white,font=\scriptsize] {3} (6);
  
  \path[->,draw,thick] (4) to node[fill=white,font=\scriptsize] {21} (5);
  \path[->,draw,thick] (4) to node[fill=white,font=\scriptsize] {19} (1);
  \path[->,draw,thick] (4) to[pos=.86] node[fill=white,font=\scriptsize] {18} (2);
  \path[->,draw,thick] (4) to[pos=.86] node[fill=white,font=\scriptsize] {8} (6);
  \path[->,draw,thick] (4) to[pos=.82] node[fill=white,font=\scriptsize] {1} (0);
  
  \path[->,draw,thick] (5) to[pos=.82] node[fill=white,font=\scriptsize] {23} (1);
  \path[->,draw,thick] (5) to node[fill=white,font=\scriptsize] {20} (3);
  
  \path[->,draw,thick] (6) to[pos=.86] node[fill=white,font=\scriptsize] {10} (0);
  \path[->,draw,thick] (6) to node[fill=white,font=\scriptsize] {7} (1);
  \path[->,draw,thick] (6) to node[fill=white,font=\scriptsize] {6} (7);
  \path[->,draw,thick] (6) to node[fill=white,font=\scriptsize] {2} (5);
  
  \path[->,draw,thick] (7) to node[fill=white,font=\scriptsize] {26} (4);
  \path[->,draw,thick] (7) to[pos=.88] node[fill=white,font=\scriptsize] {22} (5);
  \path[->,draw,thick] (7) to[pos=.88] node[fill=white,font=\scriptsize] {12} (1);
  \path[->,draw,thick] (7) to[pos=.82] node[fill=white,font=\scriptsize] {9} (3);
\end{tikzpicture}
\end{center}
\caption{The 8-alternative counterexample with one Split Cycle winner ($g$) that defeats the SSV winner ($a$).}\label{8candEx}
\end{figure}

To understand the nature and distribution of counterexamples to Conjecture~\ref{ConjInformal}, we performed a separate counterexample search on each of the 456 tournament\footnote{Recall that a tournament is a directed graph whose edge  relation is an asymmetric and connected binary relation.} isomorphism classes of size 7. To do this, we removed our symmetry-breaking clauses and replaced them with constraints assigning each of the variables $s_{(a, b), (b, a)}$ for distinct $a, b \in [n]$ to match a representative of the given isomorphism class. We found that 115 (25.2\%) of the isomorphism classes contained counterexamples, while the others did not. 

We also performed model enumeration on counterexamples of size 7, where we subsequently disallow the ordering of the edge weights after each counterexample is found. Using this method, we were able to find over 9.8 million models over the course of 19 days with CaDiCal.
 We did not exhaust all models with this search; in fact, progress in finding models remained steady until the end. 

Although our SAT encoding can be formally verified using Lean \citep{HNPZ2024}, formal verification does not seem a pressing issue in this case, since we have a human-readable proof up to $n=5$, a verified DRAT proof for $n=6$, and satisfiability results for $n > 6$, which we verified using the \texttt{pref\_voting} package \citep[\text{https://pref-voting.readthedocs.io/en/latest/}]{HollidayPacuit2025}.

\section{Conclusion and Future Directions}

The negative results of Section~\ref{subsec:results} refute Conjecture~\ref{ConjInformal}. However, in many applications, we seek a voting method for fewer than 7 alternatives. The positive results in Theorems \ref{prop:proof} and \ref{SATstronger} imply that in such cases, the choice between Simple Stable Voting (SSV) and Stable Voting (SV) is immaterial when there are no tied margins: SSV, with its relatively simple definition, can be used to explain the method to voters, while SV, with its superior average-case runtime (see \text{https://github.com/epacuit/stablevoting}), can be used to compute winners more quickly. That said, the computational complexity of computing winners for SV and for SSV remains an open problem.

The techniques employed in this paper are applicable to other questions in voting theory. As an illustration, in our GitHub repository, we show how our SAT encoding can be used to prove that SSV satisfies the property of Reversal Symmetry (\citealt[\S~3.1.3]{Saari1994}, \citealt[\S~7.1]{Saari1997}) up to 6 alternatives: if $a$ is the winner in a linear ordinal margin matrix $(X,\succ)$, then $a$ is not the winner in the reversed ordinal margin matrix $(X,\succ^{-1})$.\footnote{Thinking in terms of preference profiles, Reversal Symmetry requires that if $a$ is the winner in an initial profile, then after all voters reverse their ballots, $a$ is not a winner in the new profile.} We can apply a similar approach to other axioms and to other voting methods that require only the information in the ordinal margin matrix to select winners. As far as we know, there are no complete axiomatic characterizations
of the Ranked Pairs \cite{Tideman1987}, Beat Path \cite{Schulze2011}, or River (\citealt{Doring2025}) methods, or of SV and SSV.\footnote{By contrast, Split Cycle was recently axiomatically characterized \cite{ding2022axiomatic}.} We hope that our approach can be useful for testing conjectured axioms in efforts toward characterizations of these methods.

Contributing to a tradition of work applying SAT solving to computational social choice, the results in this paper provide further evidence of the power of SAT solving to answer open questions that have resisted other techniques.

\appendix

\section{Proof of Lemma \ref{RepThm}}

\begin{proof}
 We first turn $\mathcal{M}=(X,\succ)$ into a matrix $M$  in the usual sense, where the $(a,b)$ entry is the margin of $a$ vs. $b$. Since $\succ$ is a strict weak order, it induces a linear order of equivalence classes under the equivalence relation $\sim$ defined by $(a,b)\sim (c,d)$ if $(a,b)\not\succ (c,d)$ and $(c,d)\not\succ (a,b)$. Starting with the set $S^+ = \{(a,b)\mid (b,a)\not\succ (a,b)\}$ of ``non-negative'' pairs, if there are any pairs $(a,b)\in S^+$ for which $(b,a)\in S^+$, which are then $\succ$-minimal elements in $S^+$, these pairs are assigned weight 0; the pairs in the next weakest equivalence class are assigned weight 2, the next $4$, etc. Let $S^- = \{(a,b)\mid  (b,a)\succ (a,b)\}$ be the set of ``negative'' pairs. For each $(a,b)\in S^-$, the weight assigned to $(a,b)$ is the negative of the weight assigned to $(b,a)$ in the previous step. Now define a relation $\succ_M$ on $X^2$ by $(a,b)\succ_M(c,d)$ iff the weight of $(a,b)$ is greater than that of $(c,d)$ according to $M$. We claim that $(a,b)\succ_M(c,d)$ iff $(a,b)\succ (c,d)$. 
 
 Case 1:  Both of the pairs $(a,b),(c,d)$ are non-negative. Then the claimed equivalence is clear by construction.
 
 Case 2: One pair is non-negative and the other negative. Without loss of generality, suppose $(a,b)\in S^+$ and $(c,d)\in S^-$. Then $(a,b)\succ_M(c,d)$, since the weight of $(a,b)$ is non-negative and the weight of $(c,d)$ is negative in $M$. We claim that $(a,b)\succ (c,d)$ as well. Since $(a,b)\in S^+$ and $(c,d)\in S^-$, we have $(b,a)\not\succ (a,b)$ and $(d,c)\succ (c,d)$. Since $\succ$ is a strict weak order, $(d,c)\succ (c,d)$ implies that either $(a,b)\succ (c,d)$ or $(d,c)\succ (a,b)$. In the first case, we are done. In the second case, by ordinal skew-symmetry, we have $(b,a)\succ (c,d)$. But then since $(b,a)\not\succ (a,b)$ and $\succ$ is a strict weak order, it follows that $(a,b)\succ (c,d)$.

Case 3: $(a,b)$ and $(c,d)$ are both negative.  Then without loss of generality, it suffices to consider two cases.

Case 3a: $(a,b)\succ_M (c,d)$. Then by the construction of $\succ_M$, we have $(d,c)\succ_M(b,a)$. Since $(a,b)$ and $(c,d)$ are negative, $(b,a)$ and $(d,c)$ are positive, 
so $(d,c)\succ(b,a)$ by Case 1, so $(a,b)\succ (c,d)$ by ordinal skew-symmetry of $\succ$. 

Case 3b: $(a,b)\not\succ_M(c,d)$ and $(c,d)\not\succ_M(a,b)$. Then the negative weight assigned to $(a,b)$ by $M$ is the same as that assigned to $(c,d)$ by $M$, which implies that the positive weight assigned to $(b,a)$ by $M$ is the same as that assigned to $(d,c)$ by $M$, which implies $(b,a)\not\succ (d,c)$ and $(d,c)\not\succ (b,a)$ by construction of $M$, which in turn implies  $(c,d)\not\succ (a,b)$ and $(a,b)\not\succ (c,d)$ by  ordinal skew-symmetry.

Now clearly $M$ is a skew-symmetric matrix with even integer entries, so by Debord's Theorem \cite{Debord1987} (see \cite[Theorem 4.1]{Fischer2016}), there is a preference profile $\mathbf{P}$ whose margin matrix is~$M$. Then  $\mathcal{M}(\mathbf{P})=(X,\succ_M) = (X,\succ)=\mathcal{M}$.\end{proof}

\section{Tied Margins}\label{Ties}

In this appendix, we address the status of Conjecture \ref{Conj} if we drop the restriction to \textit{linear} ordinal margin matrices. First, if we drop this restriction, we must decide how to define Simple Stable Voting in the presence of tied margins. The simplest approach, adopted in \cite[Footnote~7]{HP2023}, is the following.

\begin{definition} For any ordinal margin matrix $\mathcal{M}$, the winners according to Simple Stable Voting with Simultaneous Elimination (SSV-SE) are defined recursively as follows:
\begin{enumerate}
\item if $\mathcal{M}$ has only one alternative, this alternative is the SSV-SE winner in $\mathcal{M}$;
\item if $\mathcal{M}$ has more than one alternative, then the SSV-SE winners in $\mathcal{M}$ are the first coordinates of the $\succ$-maximal elements of  $\{(a,b)\mid \mbox{$a$ is an SSV-SE winner in $\mathcal{M}_{-b}$}\}$.
\end{enumerate} 
\end{definition}

Interestingly, while Conjecture \ref{Conj} holds for linear ordinal margin matrices up to 6 alternatives (Theorem~\ref{SATstronger}), if we allow tied margins, then the conjecture only holds up to 4.

\begin{proposition} For any ordinal margin matrix $\mathcal{M}$ with $\leq 4$ alternatives, the SSV-SE winners are among the SC winners.
\end{proposition}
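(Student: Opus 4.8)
The plan is to prove the statement by induction on $|X|$, following the strategy of the proof of Theorem~\ref{prop:proof} but taking care with the non-linearity of $\succ$. The cases $|X|\le 2$ are immediate. For the base case $|X|=3$, suppose $a$ is an SSV-SE winner in $\mathcal{M}=(\{a,e,f\},\succ)$, witnessed by a $\succ$-maximal pair $(a,e)$ in the set $S$ of pairs $(x,y)$ with $x$ an SSV-SE winner in $\mathcal{M}_{-y}$; since $(a,e)\in S$, $a$ does not lose strictly to $f$. If $a$ were SC-defeated, its defeater would therefore have to be $e$, so $(e,a)\succ(a,e)$. Maximality of $(a,e)$ in $S$ then forces $(e,a)\notin S$, which forces $(f,e)\succ(e,f)$; and since $(f,a)\in S$ and $(e,f)\in S$, maximality together with ordinal skew-symmetry forces both $(a,f)$ and $(f,e)$ to be no weaker than $(e,a)$. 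A short split on whether $a\to f$ is an edge or $a$ and $f$ are tied now finishes it: in the first case $e\to a\to f\to e$ is a cycle in which $(e,a)$ is $\succ$-minimal, contradicting that $e$ SC-defeats $a$; in the second case $(f,a)$ lies strictly above $(e,a)$ in $S$, so $a$ is not an SSV-SE winner at all.

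For the inductive step with $|X|=4$, let $a$ be an SSV-SE winner in $\mathcal{M}$ witnessed by $(a,b)$, and suppose toward a contradiction that $a$ is SC-defeated, say by $c$. Deleting a vertex of the majority graph can only destroy cycles, never create them, so if $c\ne b$ then $c$ still SC-defeats $a$ in $\mathcal{M}_{-b}$; by the inductive hypothesis $a$ is then not an SSV-SE winner of $\mathcal{M}_{-b}$, contradicting the witness. Hence $c=b$, and $b$ has a positive margin over $a$. Now mimic the proof of Theorem~\ref{prop:proof}: set $a_0=a$ and let $a_{n+1}$ be some SSV-SE winner in $\mathcal{M}_{-a_n}$. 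Since $(a,b)$ is $\succ$-maximal in the witness set and $(a_{n+1},a_n)$ lies in it, ordinal skew-symmetry yields $(b,a)\not\succ(a_n,a_{n+1})$. This is only a weak inequality, because $\succ$ need not be linear, but it suffices: combined with $(b,a)\succ(a,b)$ it forces $a_n$ to have a strictly positive margin over $a_{n+1}$, so $b\to a\to a_1\to a_2\to\cdots$ is a genuine directed walk in $G(\mathcal{M})$ whose every edge after $b\to a$ is not strictly weaker than $(b,a)$.

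To close the step, observe that with only four alternatives the walk must revisit a vertex within two more steps, and the troublesome Case~3 of the proof of Theorem~\ref{prop:proof} cannot occur, because it requires at least five alternatives. Indeed $a_1\notin\{a,b\}$, so $a_1$ is one of the two remaining alternatives; chasing the next one or two steps and using asymmetry of the majority relation, the first repeated vertex must be either $b$ or $a$. If it is $b$, the walk closes a cycle through the edge $(b,a)$, in which $(b,a)$ is therefore $\succ$-minimal, contradicting that $b$ SC-defeats $a$. If it is $a$, then $a$ is an SSV-SE winner of some three-element restriction $\mathcal{M}_{-a_k}$ still containing $b$; by the inductive hypothesis $a$ is an SC winner there, so since $b$ has a positive margin over $a$ in $\mathcal{M}_{-a_k}$ there is a cycle of $G(\mathcal{M}_{-a_k})\subseteq G(\mathcal{M})$ in which $(b,a)$ is $\succ$-minimal, again contradicting that $b$ SC-defeats $a$ in $\mathcal{M}$. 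Either way we reach a contradiction, so $a$ is an SC winner.

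The main obstacle is not the combinatorics of the short walk --- that is simpler than in Theorem~\ref{prop:proof} precisely because four vertices are too few for Case~3 --- but the careful handling of ties: the SSV-SE ``witness'' is a $\succ$-maximal pair of a set rather than a unique pair, the ``weakest edge of a cycle'' need not be unique, and the strength inequality degrades from $\succ$ to ``not strictly weaker.'' The real content is in checking that this weak inequality still produces honest edges and still makes $(b,a)$ count as a $\succ$-minimal edge of the cycles we build, and in proving the $|X|=3$ base case by an explicit (short) case analysis rather than by invoking the linear-case three-alternative fact.
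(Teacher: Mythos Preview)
Your argument is correct and follows essentially the same route as the paper's: both build the sequence $a_0=a,\,a_1,\,a_2,\ldots$ of SSV-SE winners after successive deletions, show each $a_n\to a_{n+1}$ is a majority edge no weaker than $(b,a)$, and then, with only four alternatives, dispose of the loop-at-$a_1$ possibility (you via asymmetry of the margin relation, since $a_1\to a_2$ and $a_2\to a_1$ cannot both be edges; the paper instead observes that if $a_3=a_1$ then $(a_1,a_2)$ is a positive pair in $S$ lying strictly above the negative witness $(a,b)$, contradicting maximality). Two small remarks: in your three-alternative tied subcase you write that ``$(f,a)$ lies strictly above $(e,a)$,'' but you mean $(a,e)$ --- the zero-level pair $(f,a)$ sits strictly above the negative witness $(a,e)$, contradicting its $\succ$-maximality in $S$ (as written, $(e,a)$ is positive and hence strictly above $(f,a)$); and your self-contained three-alternative case analysis replaces the paper's appeal to Proposition~1 of \cite{HP2023b}.
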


\begin{proof} 
For two alternatives the result is immediate, and for three alternatives it follows from Proposition 1 of \cite{HP2023b}, which characterizes the SSV-SE winners in three-alternative elections. 

Finally, suppose $\mathcal{M}$ has four alternatives. Suppose for a contradiction that $(a,b)$ is $\succ$-maximal such that $a$ is an SSV-SE winner in $\mathcal{M}_{-b}$, but $a$ is not an SC winner in $\mathcal{M}$. Since $a$ is an SSV-SE winner in $\mathcal{M}_{-b}$, which has three alternatives, $a$ is an SC winner in this submatrix; it follows that $b$ (and only $b$) SC-defeats $a$ in $\mathcal{M}$. Now let $c$ be an SSV-SE winner in $\mathcal{M}_{-a}$, and let $d$ be an SSV-SE winner in $\mathcal{M}_{-c}$. Given that $a$ has a negative margin vs. $b$, i.e., $(b,a)\succ (a,b)$, the previous sentence implies $c$ has a negative margin vs. $a$ and $d$ has a negative margin vs. $c$, for otherwise $(a,b)$ would not be $\succ$-maximal such that $a$ is an SSV-SE winner in $\mathcal{M}_{-b}$. If $c = b$, then $a$ is not an SSV-SE winner in $\mathcal{M}$ after all, since $(b,a) \succ (a,b)$, and $b$ ($=c$) wins when $a$ is removed. If $d=a$, then $a$ is an SC winner in $\mathcal{M}_{-c}$, which contains $b$, so $b$ does not SC-defeat $a$ after all. If $d=b$, then since we know $(c,a), (d,c)\not\succ (a,b)$, the edge $(b,a)$ is a weakest edge in the cycle $(b,a,c)$, so $b$ does not SC-defeat $a$ after all. Thus, we have distinct alternatives $b,a,c,d$. Let $e$ be an SSV-SE winner in $\mathcal{M}_{-d}$. By reasoning analogous to that proving $d\neq a,b$, we have $e \neq a,b$. Hence $e=c$ since $|X|=4$. But then $a$ is not an SSV-SE winner in $\mathcal{M}$ after all, since $(c,d) \succ (a,b)$, and $c$ ($=e$) is an SSV-SE winner in $\mathcal{M}_{-d}$.
\end{proof}

\begin{example} Yifeng Ding (personal communication) provided the example in Figure~\ref{YifengEx} in which SSV-SE does not refine SC.

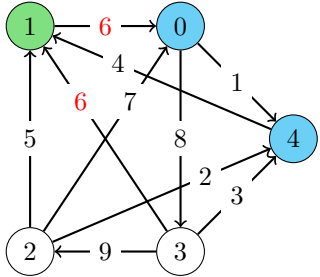
\begin{figure}[h]
\begin{center}
\begin{tikzpicture}
\node[circle,draw,minimum width=0.25in,fill=cyan!50] at (2,1.5)  (a) {$0$}; 
\node[circle,draw,minimum width=0.25in,fill=medgreen!50] at (0,1.5)  (b) {$1$}; 
\node[circle,draw,minimum width=0.25in] at (0,-1.5) (c) {$2$}; 
\node[circle,draw,minimum width=0.25in] at (2,-1.5) (d) {$3$}; 
\node[circle,draw,minimum width=0.25in,fill=cyan!50] at (3.5,0)  (e) {$4$};
\path[->,draw,thick] (b) to node[fill=white] {$\textcolor{red}{6}$} (a);
\path[->,draw,thick] (a) to node[fill=white] {$8$} (d);
\path[->,draw,thick] (a) to node[fill=white] {$1$} (e);
\path[->,draw,thick] (c) to node[fill=white] {$5$} (b);
\path[->,draw,thick] (d) to node[fill=white] {$9$} (c);
\path[->,draw,thick] (d) to node[fill=white] {$3$} (e);
\path[->,draw,thick] (c) to[pos=.7] node[fill=white] {$7$} (a);
\path[->,draw,thick] (d) to[pos=.7] node[fill=white] {$\textcolor{red}{6}$} (b);
\path[->,draw,thick] (e) to[pos=.7] node[fill=white] {$4$} (b);
\path[->,draw,thick] (c) to[pos=.7] node[fill=white] {$2$} (e);
\end{tikzpicture}
\end{center}
\caption{An example with tied margins (red) in which the SSV-SE winner (green) is not among the SC-winners (blue).}\label{YifengEx}
\end{figure}
\end{example}

However, SSV-SE is not the only way of generalizing SSV to allow tied margins. We will now consider an approach to tied margins inspired by the parallel universe tie-breaking used for the Ranked Pairs rule \cite{Tideman1987,BrillFischer2012}: faced with tied margins, we consider all possible ways of breaking such ties, and if an alternative wins under one such way, then they count as a winner.

\begin{definition} Given an ordinal margin matrix ${\mathcal{M}=(X,\succ)}$, we say that a linear ordinal margin matrix $\mathcal{M}'=(X,\succ')$ is a \textit{linearization} of $\mathcal{M}$ if for all $a,b,c,d\in X$, $(a,b)\succ (c,d)$ implies $(a,b)\succ' (c,d)$.
\end{definition}

\begin{definition}\label{SSVPUTdef} For any ordinal margin matrix $\mathcal{M}$, the winners according to Simple Stable Voting with Parallel Universe Tiebreaking (SSV-PUT) are defined by 
\[SSV\mbox{-}PUT(\mathcal{M}) = \underset{\mathcal{M}'\mbox{ \footnotesize a linearization of }\mathcal{M}}{\bigcup}SSV(\mathcal{M}').\]
\end{definition}

\noindent To relate SSV-PUT to Split Cycle, we use the following fact about how Split Cycle behaves with respect to linearizations.

\begin{lemma}\label{SClinearization} For any ordinal margin matrix $\mathcal{M}$ and linearization $\mathcal{M}'$ of $\mathcal{M}$, we have $SC(\mathcal{M}')\subseteq SC(\mathcal{M})$.\end{lemma}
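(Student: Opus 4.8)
The plan is to prove the contrapositive: if $a$ is SC‑defeated in $\mathcal{M}$ then $a$ is SC‑defeated in $\mathcal{M}'$; equivalently, $a\in SC(\mathcal{M}')$ implies $a\in SC(\mathcal{M})$. Throughout I use the following reformulation of Definition~\ref{SCdef}: in an ordinal margin matrix with relation $\succ$, an alternative $a$ is an SC winner iff for every $b$ with $(b,a)\succ(a,b)$ there is a simple path $\rho$ from $a$ to $b$ in the majority graph with no edge $e$ such that $(b,a)\succ e$ --- such a path together with the edge $b\to a$ is precisely a cycle in which $(b,a)$ is a weakest edge, and conversely any such cycle yields such a path.

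Two monotonicity observations come first: since $(x,y)\succ(y,x)$ implies $(x,y)\succ'(y,x)$, we have $G(\mathcal{M})\subseteq G(\mathcal{M}')$ on edges; and since $\succ\subseteq\succ'$, any strict comparison true in $\mathcal{M}$ is true in $\mathcal{M}'$. Now assume $a\in SC(\mathcal{M}')$ and fix an arbitrary $b$ with $(b,a)\succ(a,b)$ in $\mathcal{M}$ (hence in $\mathcal{M}'$). The reformulation applied in $\mathcal{M}'$ gives a simple path $\rho$ from $a$ to $b$ in $G(\mathcal{M}')$ with no edge $e$ such that $(b,a)\succ' e$. No edge of $\rho$ equals $(b,a)$ (since $b$ is only $\rho$'s endpoint) or $(a,b)$ (else $\rho$ would be the single edge $a\to b$, impossible since $(b,a)\succ'(a,b)$), so by linearity of $\mathcal{M}'$ every edge $(u,v)$ of $\rho$ satisfies $(u,v)\succ'(b,a)$.

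The crux is to verify that $\rho$ is already a path in $G(\mathcal{M})$ with no edge $e$ such that $(b,a)\succ e$ in $\mathcal{M}$; granting this, $\rho$ together with $b\to a$ is a cycle in $G(\mathcal{M})$ with $(b,a)$ weakest, so $b$ does not SC‑defeat $a$ in $\mathcal{M}$. Fix an arbitrary edge $(u,v)$ of $\rho$. As an edge of $G(\mathcal{M}')$ it satisfies $(u,v)\succ'(v,u)$; hence $(v,u)\not\succ(u,v)$ in $\mathcal{M}$, since otherwise this would survive to $\mathcal{M}'$ and contradict asymmetry. Also $(b,a)\not\succ(u,v)$ in $\mathcal{M}$, since otherwise this would survive to $\mathcal{M}'$ and contradict $(u,v)\succ'(b,a)$. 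The only remaining worry is that $(u,v)$ is a ``zero pair'' of $\mathcal{M}$, meaning $(u,v)\not\succ(v,u)$ and $(v,u)\not\succ(u,v)$; then $u\to v$ is not an edge of $G(\mathcal{M})$ and $\rho$ fails to be a path there. Ruling this out is the one step requiring real work.

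I handle it with a small lemma: in any ordinal margin matrix, a positive pair is $\succ$ every zero pair. Indeed, suppose $(b,a)\succ(a,b)$ but $(b,a)\not\succ(u,v)$; negative transitivity applied to $(b,a)\not\succ(u,v)$ and $(u,v)\not\succ(v,u)$ gives $(b,a)\not\succ(v,u)$; ordinal skew‑symmetry rewrites $(b,a)\not\succ(u,v)$ as $(v,u)\not\succ(a,b)$; and negative transitivity applied to $(b,a)\not\succ(v,u)$ and $(v,u)\not\succ(a,b)$ gives $(b,a)\not\succ(a,b)$, a contradiction. (Alternatively, invoke Lemma~\ref{RepThm} and argue numerically, a zero pair having margin $0$ and $(b,a)$ positive margin.) Applying this with positive pair $(b,a)$ and candidate zero pair $(u,v)$ yields $(b,a)\succ(u,v)$, contradicting $(b,a)\not\succ(u,v)$ above; so $(u,v)$ is not a zero pair, and then trichotomy together with $(v,u)\not\succ(u,v)$ forces $(u,v)\succ(v,u)$, i.e.\ $u\to v\in G(\mathcal{M})$. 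Since $(u,v)$ was arbitrary, every edge of $\rho$ lies in $G(\mathcal{M})$ and none is $\succ$‑below $(b,a)$ there, so $b$ does not SC‑defeat $a$ in $\mathcal{M}$. As $b$ was an arbitrary alternative with $(b,a)\succ(a,b)$ (any $b$ with $(b,a)\not\succ(a,b)$ trivially fails to SC‑defeat $a$), no alternative SC‑defeats $a$ in $\mathcal{M}$, so $a\in SC(\mathcal{M})$.
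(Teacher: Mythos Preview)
Your proof is correct and follows essentially the same approach as the paper's: assume $a\in SC(\mathcal{M}')$, fix $b$ with $(b,a)\succ(a,b)$, take the witnessing cycle (you phrase it as a path) in $\mathcal{M}'$, and argue that each of its edges is already an edge of $G(\mathcal{M})$ not $\succ$-below $(b,a)$. The only differences are presentational---you work with simple paths rather than cycles, you invoke linearity of $\mathcal{M}'$ (valid but not actually needed, since ``$(b,a)$ weakest'' already gives $(b,a)\not\succ'(u,v)$), and you spell out in detail the ``positive pair beats zero pair'' step that the paper compresses into ``by the properties of $\succ$.''
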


\begin{proof} Let $\mathcal{M}=(X,\succ)$ and $\mathcal{M}'=(X,\succ')$. Suppose $a\in SC(\mathcal{M}')$. To show that $a\in SC(\mathcal{M})$, consider any $b\in X$ such that $(b,a)\succ(a,b)$, so $(b,a)$ is an edge in the majority graph $G(X,\succ)$. We must show that there is a cycle in $G(X,\succ)$ such that $(b,a)$ is an edge that is weakest in the cycle according to $\succ$. Since $(b,a)\succ(a,b)$ and $\mathcal{M}'$ is a linearization of $\mathcal{M}$, we have $(b,a)\succ'(a,b)$. But then since $a\in SC(\mathcal{M}')$, there is a cycle $\rho$ in the majority graph $G(X,\succ')$ such that $(b,a)$ is an edge that is weakest in the cycle according to $\succ'$.  We claim that $\rho$ is a cycle in $G(X,\succ)$ such that $(b,a)$ is weakest in the cycle according to $\succ$. First, we show that each edge $(x,y)$ in $\rho$ is an edge in $G(X,\succ)$, i.e., $(x,y)\succ (y,x)$. For if $(x,y)\not\succ (y,x)$, then since $(b,a)\succ (a,b)$, it follows by the properties of $\succ$ that $(b,a)\succ (x,y)$, in which case $(b,a)\succ' (x,y)$ since $\mathcal{M}'$ is a linearization of $\mathcal{M}$, contradicting the fact that $(b,a)$ is the weakest edge in $\rho$ according to $\succ'$. It follows that $\rho$ is also a cycle in $G(X,\succ)$. Moreover, $(b,a)$ is a weakest edge in $\rho$ according to $\succ$, for if there is another edge $(w,v)$ in $\rho$ such that $(b,a)\succ (w,v)$, then $(b,a)\succ'(w,v)$ since $\mathcal{M}'$ is a linearization of $\mathcal{M}$, contradicting the fact that $(b,a)$ is the weakest edge in $\rho$ according to $\succ$. Thus, we have shown that for any $b\in X$ such that $(b,a)\succ (a,b)$, there is a cycle in $G(X,\succ)$ such that $(b,a)$ is a weakest edge in the cycle according to $\succ$, which establishes that $a\in SC(\mathcal{M})$.\end{proof}

Now we can show that SSV-PUT is ``as good'' at refining Split Cycle on arbitrary ordinal margin matrices as SSV is at refining Split Cycle on linear ordinal margin matrices.

\begin{proposition}\label{SSVPUTinclusion} Let $\mathcal{M}$ be an ordinal margin matrix. If for all linearizations $\mathcal{M}'$ of $\mathcal{M}$, we have $SSV(\mathcal{M}')\subseteq SC(\mathcal{M}')$, then $SSV\mbox{-}PUT(\mathcal{M})\subseteq SC(\mathcal{M})$.
\end{proposition}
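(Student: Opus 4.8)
The plan is to prove the inclusion pointwise, chaining together Definition~\ref{SSVPUTdef}, the hypothesis of the proposition, and Lemma~\ref{SClinearization}; once that lemma is available, essentially no further work is needed. So I would fix an arbitrary alternative $a\in SSV\mbox{-}PUT(\mathcal{M})$ and aim to show $a\in SC(\mathcal{M})$.

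First I would unpack Definition~\ref{SSVPUTdef}: since $SSV\mbox{-}PUT(\mathcal{M})$ is defined as the union, over all linearizations $\mathcal{M}'$ of $\mathcal{M}$, of $SSV(\mathcal{M}')$, membership of $a$ in this union produces a particular linearization $\mathcal{M}'$ of $\mathcal{M}$ with $a\in SSV(\mathcal{M}')$. Next I would apply the hypothesis of the proposition to this $\mathcal{M}'$: because $\mathcal{M}'$ is itself a linearization of $\mathcal{M}$, the assumption $SSV(\mathcal{M}')\subseteq SC(\mathcal{M}')$ applies and gives $a\in SC(\mathcal{M}')$. Finally I would invoke Lemma~\ref{SClinearization} with the pair $\mathcal{M},\mathcal{M}'$, which states $SC(\mathcal{M}')\subseteq SC(\mathcal{M})$; hence $a\in SC(\mathcal{M})$. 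Since $a$ was arbitrary, this establishes $SSV\mbox{-}PUT(\mathcal{M})\subseteq SC(\mathcal{M})$.

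I do not expect any genuine obstacle here: all the substantive content has already been isolated, on the one hand into the blanket hypothesis about $SSV$ versus $SC$ on linear ordinal margin matrices, and on the other hand into Lemma~\ref{SClinearization}, which says that refining tied margins into a strict linear order can only \emph{shrink} the Split Cycle winner set. The single point to check carefully is the direction of containment in that lemma --- we need $SC(\mathcal{M}')\subseteq SC(\mathcal{M})$, i.e., passing to a linearization never creates new Split Cycle winners --- which is exactly how the lemma is stated, so the composition of the three inclusions goes through in the right direction.
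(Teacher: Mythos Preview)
Your proposal is correct and follows essentially the same approach as the paper: the paper expresses the argument as a chain of set inclusions $SSV\mbox{-}PUT(\mathcal{M}) = \bigcup_{\mathcal{M}'} SSV(\mathcal{M}') \subseteq \bigcup_{\mathcal{M}'} SC(\mathcal{M}') \subseteq SC(\mathcal{M})$, invoking the hypothesis for the first inclusion and Lemma~\ref{SClinearization} for the second, which is exactly your pointwise argument written in set-theoretic form.
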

\begin{proof}We have 
\begin{eqnarray*}
SSV\mbox{-}PUT(\mathcal{M}) &=& \underset{\mathcal{M}'\mbox{ \footnotesize a linearization of }\mathcal{M}}{\bigcup}SSV(\mathcal{M}') \\
&\subseteq&  \underset{\mathcal{M}'\mbox{ \footnotesize a linearization of }\mathcal{M}}{\bigcup}SC(\mathcal{M}')\\
&\subseteq& SC(\mathcal{M})\end{eqnarray*}
using Lemma \ref{SClinearization} for the last inclusion.\end{proof}

Combining Theorem \ref{SATstronger} and Proposition \ref{SSVPUTinclusion}, we have the following immediate corollary. 

\begin{corollary} For all ordinal margin matrices $\mathcal{M}$ with 6 or fewer alternatives, $SSV\mbox{-}PUT(\mathcal{M})\subseteq SC(\mathcal{M})$.
\end{corollary}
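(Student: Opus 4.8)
The plan is to obtain the corollary by directly combining the two preceding results: Theorem~\ref{SATstronger}, which controls the behavior of SSV on linear ordinal margin matrices with at most 6 alternatives, and Proposition~\ref{SSVPUTinclusion}, which lifts such a bound from linearizations to the original (possibly non-linear) matrix via parallel-universe tiebreaking. No new combinatorics or SAT encoding is needed; the real work was already done in establishing Theorem~\ref{SATstronger} and Lemma~\ref{SClinearization} (the latter underlying Proposition~\ref{SSVPUTinclusion}).

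First I would fix an arbitrary ordinal margin matrix $\mathcal{M}=(X,\succ)$ with $|X|\leq 6$ and verify the hypothesis of Proposition~\ref{SSVPUTinclusion}: that $SSV(\mathcal{M}')\subseteq SC(\mathcal{M}')$ for every linearization $\mathcal{M}'$ of $\mathcal{M}$. The key observation is that a linearization $\mathcal{M}'=(X,\succ')$ has the same alternative set $X$ as $\mathcal{M}$, so $|X|\leq 6$ still holds, and by definition $\mathcal{M}'$ is a linear ordinal margin matrix. Hence Theorem~\ref{SATstronger} applies verbatim to $\mathcal{M}'$, giving that its SSV winner---unique by Lemma~\ref{UniqueSSV}---is an SC winner, i.e., $SSV(\mathcal{M}')\subseteq SC(\mathcal{M}')$.

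With this hypothesis in hand, I would simply invoke Proposition~\ref{SSVPUTinclusion} to conclude $SSV\mbox{-}PUT(\mathcal{M})\subseteq SC(\mathcal{M})$, which is exactly the claim.

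I do not anticipate any substantive obstacle: the only point requiring care is that the size restriction ``at most 6 alternatives'' is a property of the underlying alternative set, which is preserved under passing to a linearization, so the quantifier over linearizations appearing in the hypothesis of Proposition~\ref{SSVPUTinclusion} ranges only over matrices to which Theorem~\ref{SATstronger} already applies.
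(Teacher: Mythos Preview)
Your proposal is correct and matches the paper's approach exactly: the corollary is stated as an immediate consequence of Theorem~\ref{SATstronger} and Proposition~\ref{SSVPUTinclusion}, and your write-up simply spells out why the hypothesis of the latter is met (linearizations keep the same alternative set, so Theorem~\ref{SATstronger} applies to each of them).
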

\noindent On the other hand, since $SSV\mbox{-}PUT(\mathcal{M})=SSV(\mathcal{M})$ for any linear ordinal margin matrix $\mathcal{M}$, the example in Section~\ref{subsec:results} shows that SSV-PUT does not refine Split Cycle in all $7$-alternative elections.

\bibliography{refs}

\end{document}